\pgfplotsset{compat=1.18}
\newtheorem{proposition}{Proposition}
\begin{document}
\title{Qubit-efficient quantum local search for combinatorial optimization}

\author{M.~Podobrii}
\email{mpd@terraquantum.swiss}
\affiliation{Terra Quantum AG, Kornhausstrasse 25, 9000 St. Gallen, Switzerland}
\author{V.~Kuzmin}
\affiliation{Terra Quantum AG, Kornhausstrasse 25, 9000 St. Gallen, Switzerland}
\author{V.~Voloshinov}
\affiliation{Terra Quantum AG, Kornhausstrasse 25, 9000 St. Gallen, Switzerland}
\author{M.~Veshchezerova}
\affiliation{Terra Quantum AG, Kornhausstrasse 25, 9000 St. Gallen, Switzerland}

\author{M. R. Perelshtein} %\footnote{A}}}
\email{mpe@terraquantum.swiss}
\affiliation{Terra Quantum AG, Kornhausstrasse 25, 9000 St. Gallen, Switzerland}

\begin{abstract}
\noindent
An essential component of many sophisticated metaheuristics for solving combinatorial optimization problems is some variation of a local search routine that iteratively searches for a better solution within a chosen set of immediate neighbors. The size \(l\) of this set is limited due to the computational costs required to run the method on classical processing units. We present a qubit-efficient variational quantum algorithm that implements a quantum version of local search with only \(\lceil \log_2 l \rceil\) qubits and, therefore, can potentially work with classically intractable neighborhood sizes when realized on near-term quantum computers. Increasing the amount of quantum resources employed in the algorithm allows for a larger neighborhood size, improving the quality of obtained solutions. This trade-off is crucial for present and near-term quantum devices characterized by a limited number of logical qubits. Numerically simulating our algorithm, we successfully solved the largest graph coloring instance that was tackled by a quantum method. This achievement highlights the algorithm's potential for solving large-scale combinatorial optimization problems on near-term quantum devices.
\end{abstract}

\maketitle
\sloppy
\section{Introduction}\label{sec:Introduction}

\begin{figure}
    \centering
    \includegraphics[width=\linewidth]{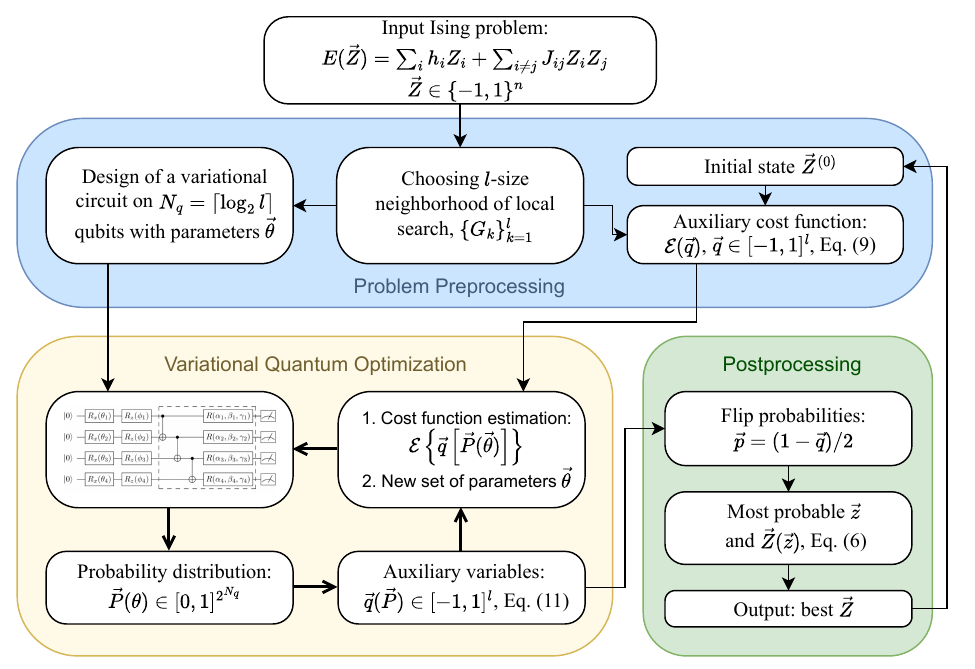}
    \caption{Scheme of our approach. We reduce the initial (discrete) Ising ground state problem to the optimization of a continuous \textit{auxiliary function} $\mathcal{E}(\vec{q})$ (Eq.~(\ref{eq:e_cont})). The auxiliary function is defined by choosing an initial state $\vec{Z}^{(0)}$ and groups $\{G_k\}$, where $G_k \subseteq [0{:}n{-}1]$. To find approximate local minima of the auxiliary function, we employ a variational quantum algorithm, enabling function evaluation even when $l$ is classically intractable. This is achieved by mapping the probability distribution $\vec{P}$ generated from quantum state measurements to the auxiliary variables $\vec{q}$ using Eq.~(\ref{eq:q_from_P}). After optimization of the quantum circuit, we estimate $\vec{p}=(1-\vec{q})/2$, which represents the probabilities of group flips. Using these probabilities, we sample several most probable $\vec{z}$, where $z_k=-1$ if the corresponding group $G_k$ is flipped and $1$ otherwise. We decode $\vec{z}$ into the original variables $\vec{Z}$ via Eq.~(\ref{eq:rev}) and choose the best $\vec{Z}$ with the minimal energy. Then, we restart the procedure from the beginning taking $\vec{Z}^{(0)}=\vec{Z}^{\mathrm{best}}$. We perform several such rounds until the solution is not improved.}
    \label{fig:scheme}
\end{figure}

Local search is one of the most fundamental heuristic algorithms successfully applied to many challenging combinatorial optimization problems. The so-called $r$-local search (or $r$-flip) algorithm~\cite{Alidaee2023} starts from a candidate solution and iteratively explores a chosen neighborhood, of size $l$, of solutions that differ by at most $r$ bits (or Hamming distance $r$). The process stops when all the neighbors are not better than the current solution. Many metaheuristics, such as simulated annealing~\cite{Kirkpatrick1983}, tabu search~\cite{Palubeckis2004_ts}, and memetic algorithms~\cite{Merz2005_memetic}, are based on local search~\cite{heuristics}. Increasing the size of the neighborhood by increasing $r$ can yield better solutions at the cost of an exponentially growing runtime that follows the binomial coefficient \(l \sim C_n^r \sim O(n^r)\) where $n$ is the total number of variables. Therefore, the parameter $r$ typically must be chosen as small.

In this work, we address the aforementioned problem by introducing a Variational Quantum Algorithm (VQA) that uses a parametrized quantum-circuit ansatz of sufficient depth to implement the $r$-local search heuristic. The distinct feature of our approach is that the number, $N_q$, of employed qubits adapts to the chosen parameter $r$, ranging from \(N_q=\lceil\log_2 n\rceil\) when $r=1$ to $N_q=n$ when $r=n$. In contrast to the previously proposed \textit{quantum local search} method \cite{Tomesh2022, Liu2023}, which reduces the number of qubits by breaking the problem into smaller subproblems, our approach addresses the entire problem directly and achieves qubit reduction through amplitude encoding. That is, for a chosen value of $r$, our method encodes the probabilities of flipping groups of bits, each involving no more than $r$ bits, into the square norm of the quantum state's amplitudes. Then, the parameters of the variational circuit are tuned so that the probabilities from the resulting quantum state minimize a designed non-linear objective function whose local minima coincide with the local minima of $r$-local search. We emphasize that, as demonstrated in our work, the required number of shots for estimating this cost function doesn't explicitly grow with $r$, allowing our quantum algorithm to handle a classically intractable number of neighbors. Finally, the optimized quantum state decodes into a set of multi-bit flips that are applied to the initial candidate to generate the resulting solution. The scheme of our algorithm is depicted in Fig.~\ref{fig:scheme}.

Previously, qubit-efficient approaches that require only \(N_q=\lceil\log_2 n\rceil\) qubits for solving $n$-variable optimization problems have been proposed in Refs.~\cite{Tan2021, Perelshtein2023nisqcompatible, tenecohen2023, PhysRevA.109.052441, PhysRevResearch.5.L012021}. However, such an efficient encoding automatically implies that the complexity of classically simulating the involved quantum circuit scales only polynomially with the problem size, limiting the opportunities for achieving quantum advantage on actual quantum hardware. Moreover, in our work, we show that the performance of the method proposed in Refs.~\cite{Tan2021, Perelshtein2023nisqcompatible}, referred to as \textit{minimal} encoding, is limited due to the presence of multiple local minima in the underlying cost function. While these functions can somehow be optimized with global optimization routines such as \textit{differential evolution} \cite{diff_evolution_scipy}, gradient-based methods will inevitably fall into local minima. In contrast, while increasing neighborhood size $l$ by employing a growing number, $N_q$, of qubits, our approach allows us to gradually eliminate these local minima. This leads to higher-quality solutions when the quantum variational ansatz is expressive enough.

At the opposite limit, $r=n$, our algorithm requires \(N_q=n\) qubits. In this case, all solutions are neighbors to each other, and all the local minima in the cost function are caused by the finite-depth circuit ansatz. This limit is similar to algorithms that use the so-called \textit{complete} encoding~\cite{Tan2021}, such as VQE \cite{Peruzzo2014, DezValle2021} and QAOA \cite{farhi2014quantum, Blekos_2024}, where the number of qubits is equal to the number of classical variables. In this work, we show that at this limit, our method gives compatible results to VQE with complete encoding, see Appendix~\ref{sec:Appendix/VQE_comparison}. 

Thus, our method intends to fill the gap between the \textit{minimal} and \textit{complete} encodings. In this regime, the involved quantum circuit might not be efficiently simulatable, potentially offering a quantum advantage. At the same time, for a given problem, our method allows the required number of qubits to be adjusted depending on the quantum device at hand. Some other approaches~\cite{Fuller2024, Patti2022} reduce the qubit number by only a constant factor. Reducing the number of qubits in a low-depth variational circuit could also improve the quality of solutions since the cost function of shallow circuits that have many qubits can suffer from severe under-parametrization, see Appendix~\ref{sec:Appendix/VQE_comparison}.

Other methods with an adaptive number of qubits were previously proposed in Refs.~\cite{Tan2021, sciorilli2024}. However, Ref.~\cite{Tan2021} shows the advantage only on sparse problems, while for dense problems, it demonstrates even worse results than minimal encoding due to the presence of contradictions in the procedure for decoding the solution from the quantum state. Furthermore, there is no mathematical justification for how increasing the qubit number affects the quality of the obtained solutions. In contrast, our method provides decoding without any contradictions. We show that it is able to reach the same solution quality as $r$-local search for $r>1$ and, therefore, that it outperforms \textit{minimal} encoding both on sparse and dense problems. 

While we were developing our approach, Sciorilli et al.~\cite{sciorilli2024} proposed an alternative quantum algorithm for solving combinatorial optimization problems using an adaptive number of qubits, demonstrating competitive results on the \textit{MaxCut} problems. Our method offers the possibility of constructing problem-specific objective functions. This advantage might be crucial for sparse and constrained problems. This is demonstrated in our numerical experiment on the graph coloring problem, where our approach successfully solves the largest instance tackled by a quantum algorithm to date.

The structure of this paper is organized as follows. In Sec.~\ref{sec:Objective}, we begin by introducing an auxiliary objective function whose local minima match those of local search over a specified neighborhood and discuss an efficient selection of the appropriate neighborhood for a given problem. Next, in Sec.~\ref{sec:Quantum}, we outline how to optimize the auxiliary function using a variational quantum algorithm. In Sec.~\ref{sec:Analysis}, we analyze the resource requirements to show that the quantum algorithm can handle a classically intractable number of neighbors. In Sec.~\ref{sec:Methods}, we specify the simulation details and the considered problems. Finally, in Sec.~\ref{sec:Results}, we present examples demonstrating our algorithm's competitive performance across various problems on a simulator and on a real IBM quantum device.

\section{Objective function}
\label{sec:Objective}
In our work, we focus on a class of quadratic unconstrained binary optimization (QUBO) problems, which is equivalent to the class of Ising models. These classes are general since most combinatorial optimization problems can be formulated as QUBO or Ising problems~\cite{qubo1, ising}.

QUBO problems are formulated as follows: for a given real upper-triangular matrix \(A\in\mathbb{R}^{n\times n}\), find a binary vector \(\vec{x}^*\in \{0,1\}^n\) that minimizes the function: 
\begin{equation}
C(\vec{x})=\vec{x}^TA\vec{x}=\sum_{i,j=1}^n a_{ij}x_ix_j=\sum_{i}^n a_{ii}x_i+\sum_{i<j}^n a_{ij}x_ix_j.
\label{eq:qubo}
\end{equation}

By transforming \(x_i \to (1-Z_i)/2\), QUBO problems can be reduced to the Ising model:
\begin{equation}
    E (\vec{Z})=\sum_i h_i Z_i +\sum_{i<j} J_{ij} Z_i Z_j,
\label{eq:ising}
\end{equation}
where $Z_i\in\{-1,1\}$ are spin variables. The coefficients are related as $J_{ij}=a_{ij}/4$, ${h_i=-\sum_k (a_{ik}+a_{ki})/4.}$

The original problem, given by Eq.~(\ref{eq:qubo}) or (\ref{eq:ising}), is defined on a discrete space. In the following, we define an auxiliary problem over continuous variables that can be optimized with a variational quantum algorithm and whose solution is equivalent to the original problem.

\subsection{Bilinear relaxation}
\label{sec:Objective/Bilinear}
The bilinear continuous relaxation of the problems (\ref{eq:qubo}) and (\ref{eq:ising}) are:
\begin{equation}
\text{QUBO:}~ \overline{C}(\vec{p})=\sum_{i} a_{ii}p_i+\sum_{i< j} a_{ij}p_ip_j,\quad p_i\in[0,1],
\label{eq:qubo_min}
\end{equation}
\begin{equation}
    \text{Ising:}~\overline{E}(\vec{q})=\sum_i h_i q_i +\sum_{i< j} J_{ij} q_i q_j,\quad q_i\in[-1,1],
\label{eq:ising_min}
\end{equation}
Assuming that \(x_i\) are independent random variables, the continuous function (\ref{eq:qubo_min}) is precisely~\cite{Boros2002PseudoBooleanO} the average value of the original discrete objective function (\ref{eq:qubo}) over a variable assignment distribution, where \(p_i=P(x_i=1)\) is the probability that \(x_i\) has value $1$. For the Ising model, consequently, \((1-q_i)/2\) is the probability \(P(Z_i=-1)\).

The functions (\ref{eq:qubo_min}) and (\ref{eq:ising_min}) have the same minimum as (\ref{eq:qubo}) and (\ref{eq:ising}) over discrete variables~\cite{Boros2002PseudoBooleanO}; therefore, optimization of the bilinear relaxation over the corresponding cube ($[0,1]^n$ for QUBO or $[-1,1]^n$ for Ising) is equivalent to the original discrete problem. 

We analyze the local minima to assess the solution quality achieved by optimizing the bilinear relaxation using gradient-based methods. The following propositions characterize the local minima of multilinear functions on the hypercube:
\begin{proposition}
    \label{th:loc_min}
    Let \(f(\vec{p})\) be a multilinear function. Then, on the hypercube \([a,b]^n\), all local minima are at its vertices - points in $\{a, b\}^n$. 
        
\end{proposition}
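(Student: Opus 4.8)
The plan is to exploit the defining feature of a multilinear function: for each fixed coordinate $i$, with all other coordinates held constant, $f$ is an affine (degree-one) function of $p_i$ alone. A local minimum of $f$ restricted to the hypercube must in particular be a local minimum along each coordinate direction, so I would argue coordinate-by-coordinate and show that no local minimum can have a coordinate lying strictly in the open interval $(a,b)$.

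First I would fix a candidate local minimizer $\vec{p}^{\,*}\in[a,b]^n$ and suppose, for contradiction, that some coordinate satisfies $p_i^*\in(a,b)$, i.e.\ it is interior. By multilinearity, write $f$ as a function of $p_i$ with the remaining coordinates frozen at $\vec{p}^{\,*}$:
\begin{equation}
    f(\vec{p}^{\,*}_1,\dots,p_i,\dots,\vec{p}^{\,*}_n)=\alpha\, p_i+\beta,
\label{eq:affine_slice}
\end{equation}
where $\alpha=\partial f/\partial p_i$ evaluated at $\vec{p}^{\,*}$ (a constant, since the partial derivative of a multilinear function does not depend on $p_i$ itself) and $\beta$ collects the remaining terms. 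This affine slice is monotone in $p_i$. If $\alpha\neq 0$, one can decrease $f$ by moving $p_i$ toward $a$ or $b$ (whichever makes $\alpha p_i$ smaller), and since $p_i^*$ is interior there is room to move in both directions, contradicting local minimality. If $\alpha=0$, $f$ is constant along this direction, so I can freely slide $p_i$ to the boundary value $a$ without changing $f$, producing another point that is still a local minimum but with one more coordinate pinned to $\{a,b\}$.

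Iterating this argument over all coordinates that are interior, I would conclude that every local minimum either is already a vertex or can be continuously deformed (without increasing $f$) to a vertex, and that a strict local minimum cannot have any interior coordinate at all. The main subtlety to handle carefully is the degenerate case $\alpha=0$: here the statement "all local minima are at vertices" must be read as "there is always a vertex achieving the same value," since a flat face can contain non-vertex local minima. I would address this either by noting that a vertex minimizer always exists and shares the optimal value (which suffices for the subsequent use of the proposition), or by stating the claim for the relevant generic/strict case; the boundary bookkeeping of which direction decreases $\alpha p_i$ is otherwise routine.
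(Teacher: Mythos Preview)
Your coordinate-by-coordinate argument is correct and is essentially the standard proof of this fact. The paper does not actually supply its own proof of this proposition; it simply cites an external reference (Laneve~2010), so there is nothing substantive to compare your approach against. You are also right to flag the degenerate case $\alpha=0$: as literally stated, the proposition fails for non-strict local minima (e.g.\ a constant $f$), and the correct reading---which is all that the paper actually uses downstream---is that every local-minimum value is attained at a vertex, or equivalently that every \emph{strict} local minimum lies in $\{a,b\}^n$.
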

The proof of this proposition is given in Ref.~\cite{Laneve2010}. As an example, it means that for function (\ref{eq:qubo_min}), the local minima reside in \(\{0,1\}^n\).
\begin{proposition}
    \label{th:loc_min2}
    Let \(\overline{a}:=b,~\overline{b}:=a\) (e.g. for $x\in \{0, 1\}$ the $\overline{x}$ is \textit{logical negation}). $\vec{p}\in \{a,b\}^n$ is a local minimum of $f$ over $[a, b]^n$ $\iff f(p_1,\ldots,\overline{p_i},\ldots,p_n)>f(p_1,\ldots,p_i,\ldots,p_n)$ for all $i\in[1{:}n]$
\end{proposition}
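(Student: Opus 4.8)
The plan is to exploit two structural facts about a multilinear $f$: along any edge of the cube it is an affine function of the single varying coordinate, and in a full neighborhood of a vertex its increment splits into a first-order (linear) term plus a remainder of order $\|\vec t\|_1^2$. First I would fix a vertex $\vec p\in\{a,b\}^n$ and record the inward feasible directions: set $\sigma_i=+1$ if $p_i=a$ and $\sigma_i=-1$ if $p_i=b$, so that the points of $[a,b]^n$ near $\vec p$ are exactly $\vec p+\vec t$ with $\sigma_i t_i\ge 0$ for every $i$. Because $f$ is affine along edge $i$, the single-flip increment equals the edge slope times the edge length, $f(p_1,\ldots,\overline{p_i},\ldots,p_n)-f(\vec p)=\sigma_i(b-a)\,\partial_i f(\vec p)$. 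Since $b-a>0$, the hypothesis ``$f(\ldots,\overline{p_i},\ldots)>f(\vec p)$ for all $i$'' is therefore equivalent to the single condition $(\star)$: $\sigma_i\,\partial_i f(\vec p)>0$ for all $i$. The whole argument then reduces to proving that $(\star)$ is equivalent to $\vec p$ being a \emph{strict} local minimum; I would state this strictness convention explicitly, since a coordinate along which $f$ is flat would otherwise make the stated strict inequality fail.

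For the direction $(\star)\Rightarrow$ local minimum, I would expand $f(\vec p+\vec t)=f(\vec p)+\sum_i \partial_i f(\vec p)\,t_i+R(\vec t)$, where the remainder collects the multilinear terms of degree $\ge 2$ and hence satisfies $|R(\vec t)|\le M\|\vec t\|_1^2$ on a bounded region. Rewriting the linear part as $\sum_i (\sigma_i\partial_i f(\vec p))(\sigma_i t_i)$ and using $\sigma_i t_i=|t_i|\ge 0$ for feasible $\vec t$ together with $(\star)$, this part is at least $c\,\|\vec t\|_1$ with $c=\min_i \sigma_i\partial_i f(\vec p)>0$. Hence $f(\vec p+\vec t)-f(\vec p)\ge \|\vec t\|_1\,(c-M\|\vec t\|_1)>0$ whenever $0<\|\vec t\|_1<c/M$, so $\vec p$ is a strict local minimum.

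For the converse I would probe one edge at a time. Moving inward along edge $i$ by $\vec t=\sigma_i\varepsilon\,e_i$ is feasible for small $\varepsilon>0$, and $f(\vec p+\vec t)-f(\vec p)=\varepsilon\,\sigma_i\partial_i f(\vec p)$ holds exactly, again by affineness along the edge. If $\vec p$ is a strict local minimum this quantity must be positive, giving $\sigma_i\partial_i f(\vec p)>0$; ranging over $i$ yields $(\star)$ and therefore the stated strict inequalities.

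I expect the only genuine obstacle to be the forward direction: controlling the degree-$\ge 2$ remainder so that the first-order term dictates the sign of the increment throughout the feasible cone, and keeping the bookkeeping confined to the inward directions rather than all of $\mathbb{R}^n$. The strictness convention also deserves a remark, since $(\star)$ forces every $\partial_i f(\vec p)\ne 0$, which is precisely what excludes the flat-edge situation that would otherwise break the equivalence. Because the statement already restricts $\vec p$ to the vertices, in agreement with Proposition~\ref{th:loc_min}, no interior or relative-interior-of-a-face cases need to be examined.
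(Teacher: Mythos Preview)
Your argument is correct and follows the same core idea as the paper---reduce the single-flip condition to the sign of the inward directional derivatives, then argue that this sign condition characterises local minimality at a vertex---but your execution is considerably more thorough. The paper simply fixes the vertex $(a,\ldots,a)$ by symmetry and then \emph{asserts} that ``$A$ is a local minimum if and only if $\partial f/\partial p_i>0$ for all $i$,'' immediately reading off the flip inequality from the affine formula for $\partial f/\partial p_i$. You instead keep a general vertex via the signs $\sigma_i$ and actually \emph{prove} the asserted equivalence: the forward direction via the first-order/remainder decomposition with the $\|\vec t\|_1^2$ bound, and the converse by probing along edges. Your remark on strictness is also a genuine improvement: as stated, the proposition's strict inequality $f(\ldots,\overline{p_i},\ldots)>f(\vec p)$ cannot hold when $f$ is flat along an edge, so the equivalence is really with \emph{strict} local minimality, a point the paper leaves implicit. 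The trade-off is length: the paper's WLOG shortcut and appeal to the obvious corner characterisation make for a two-line proof, whereas your version is self-contained.
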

\begin{proof}
Consider the vertex $A=(a,\ldots,a)$ without loss of generality. $A$ is a local minimum if and only if $\partial f /\partial p_i>0$ for all $i$. Since a multilinear function is linear by each variable when others are fixed:
\begin{equation}
    \frac{\partial f}{\partial p_i}(A)=\frac{f(a\ldots ,p_i=b, \ldots a)-f(a \ldots ,p_i=a, \ldots a)}{b-a}.
\end{equation}
Given that $b > a$, we obtain the desired statement.
\end{proof}
Applying the proposition~\ref{th:loc_min2} to function (\ref{eq:qubo_min}), we obtain that a solution is a local minimum of the bilinear relaxation if and only if it is better than all neighbor solutions differing by one bit. Note that this condition coincides with the local minimum of 1-local search. A gradient-based optimizer may get stuck in any local minimum; therefore, optimization of function (\ref{eq:qubo_min}) gives solutions of similar quality as 1-local search. As a consequence, the results can be very poor: for some constrained optimization problems, such as the traveling salesman problem or graph coloring, when the constraints are treated as penalties, every feasible solution is a local optimum for 1-local search.

\subsection{Encoding larger neighborhoods}\label{sec:Objective/EncodingNeighbors}
In bilinear relaxation, each variable encodes a single bit of the solution. As a result, the set of neighbors in the corresponding local search consists of only solutions that differ by one bit. To include more neighbors in the set, we add variables that control flipping groups of several bits. 

Starting from here, we focus on the Ising representation of the QUBO problem, for which the calculations are more straightforward. First, we choose the initial state of the spins to be $\vec{Z}^{(0)}$. Then, we introduce a new set of $l$ variables, \(z_k\in \{-1,1\}\), where each variable corresponds to a flip of a specific subset \(G_k\) of spins. States of the original Ising variables are decoded back by the following product:
\begin{equation}
    Z_i=Z^{(0)}_i\prod_{k:i\in G_k} z_k,
    \label{eq:rev}
\end{equation}
which counts how many times $i$-th spin is flipped within the groups. This definition implies that flipping \(z_k\) results in flipping all spins $Z_i$ in \(G_k\).
 
Substituting (\ref{eq:rev}) into (\ref{eq:ising}), we obtain:
\begin{equation}
    E(\vec{z})=\sum_i h_i Z^{(0)}_i\prod_{k:i\in G_k} z_k + \sum_{i < j} J_{ij} Z^{(0)}_i Z^{(0)}_j\prod_{k:i\in G_k} z_k \prod_{k:j\in G_k} z_k.
\label{eq:e_tmp}
\end{equation}
Since \(z_k^2=1\), Eq.~(\ref{eq:e_tmp}) can be rewritten in a multilinear form:
\begin{equation}
     E(\vec{z})=\sum_i h_i Z^{(0)}_i\prod_{k:i\in G_k} z_k + \sum_{i < j} J_{ij}Z^{(0)}_i Z^{(0)}_j \prod_{\substack{k:i\in G_k\\j\notin G_k}}z_k
         \prod_{\substack{k:i\notin G_k\\j\in G_k}}z_k. 
    \label{eq:e_discrete}
\end{equation}
As with bilinear relaxation, this allows the discrete variables \(z_k\in\{-1,1\}\) to be replaced with continuous variables \(q_k\in [-1,1]\). We call the resulting function the \textit{auxiliary} function:
\begin{equation}
     \mathcal{E}(\vec{q})=\sum_i h_i Z^{(0)}_i\prod_{k:i\in G_k} q_k + \sum_{i < j} J_{ij}Z^{(0)}_i Z^{(0)}_j \prod_{\substack{k:i\in G_k\\j\notin G_k}}q_k
         \prod_{\substack{k:i\notin G_k\\j\in G_k}}q_k. 
    \label{eq:e_cont}
\end{equation}
In a similar fashion to the aforementioned bilinear relaxation, Eq.~(\ref{eq:e_cont}) represents the average value of the Eq.~(\ref{eq:e_discrete}) over a variable assignment distribution where $z_k$ are independent and \((1-q_k)/2\) is the probability of \(z_k=-1\).

Applying proposition~\ref{th:loc_min2} to function~(\ref{eq:e_cont}), one can prove that the local minima of this function correspond to the solutions that are better than all neighboring solutions differing by spin-flips encoded in groups \(\{G_k\}\), i.e., we built an analog of a local search over a chosen neighborhood. When groups $G_k=\{k\}$ with $k\in [1{:}n]$ include only one bit and all $Z^{(0)}_i=1$, the auxiliary function (\ref{eq:e_cont}) matches with the bilinear relaxation (\ref{eq:ising_min}). If all subsets of spins are encoded, i.e., $\{G_k\}=\mathcal{P}(\{1,2,\ldots ,n\})$, where $\mathcal{P}(S)$ denotes a power set of $S$, then the local minima of the auxiliary function map to the global optima of the initial problem. However, in that case, the number of groups and, consequently, the number of variables in the auxiliary function equals $2^n$.

Note that Eq.~(\ref{eq:e_cont}) is a high-order polynomial. For instance, for $r$-flip groups, the degree of the product $\prod_{k:i\in G_k} q_k$ equals $O(n^{r-1})$. Since $q_k\in [-1,1]$, this leads to a barren plateau-like problem~\cite{McClean2018}, where function~(\ref{eq:e_cont}) has values close to zero for most of the landscape. Moreover, since taking the derivative reduces the degree only by 1, this issue also concerns the gradient, hessian, and so on, leading to an adverse flat landscape if $r$ is large enough. We mitigate this problem by an additional reparameterization described in Sec.~\ref{sec:Quantum/P_to_q}.

The presented approach can be straightforwardly extended to general PUBO problems, as shown in Appendix~\ref{sec:Appendix/HighOrder}.

\subsection{Neighbors choice}
\label{sec:Objective/NeighborsChoice}

For an optimization problem with all-to-all interactions, there is no general rule for choosing a set of neighbors for a given $r$-local search; thus, in this scenario, all solutions at Hamming distance at most $r$ can be included in the set of neighbors defined by \(\{G_k\}\).
However, for sparse problems, it makes sense to select only a problem-specific subset of the neighbors.

Indeed, let's consider a 1-local search or bilinear relaxation. The increment of the energy (\ref{eq:ising}) after inverting the \(k\)-th spin is
\begin{equation}
    \Delta_k =-2Z_k(h_{k}+\sum_{i\neq k}(J_{ki}+J_{ik})Z_i),
\end{equation}
where $Z_k$ is the spin value in the current solution.

From this follows that \(\Delta_k\) depends on \(Z_k\) and adjacent spins, i.e., those spins \(Z_i\) for which \(J_{ki}+J_{ik}\ne 0\). Therefore, the increment of the energy after the simultaneous inversion of two nonadjacent spins \((m,k)\) is \(\Delta_{mk}=\Delta_m + \Delta_k\). For the local minimum of 1-local search or the bilinear relaxation, \(\Delta_m>0\) and \(\Delta_k>0\), and therefore \(\Delta_{mk}>0\). In other words, if neither flipping the $m$-th spin nor flipping the $k$-th spin improves the solution, neither will the simultaneous flipping of both of them. That is, if we pursue the elimination of local minima, it makes sense to include only adjacent spin pairs into groups \(\{G_k\}\).

The same idea can be generalized to large spin sets: if a group of spins \(G\) is a union of disjoint groups \(G_1\), \(G_2\), where no elements \(k\in G_1\) and \(m\in G_2\) interact (\(J_{km}{=}J_{mk}{=}0\)), then the increment in the energy after inverting all the spins from \(G\) is \(\Delta_{G}=\Delta_{G_1}+\Delta_{G_2}\). If \(G_1\) and \(G_2\) are encoded, encoding \(G\) does not contribute to the elimination of the local minima.
Therefore, in general, it makes sense to encode only connected induced subgraphs of size $\leq r$ to implement $r$-local search.

Moreover, for constrained problems, additional groups can be excluded if we can ensure that flipping them in a feasible solution would lead to an infeasible one. An example of this is the graph coloring problem, which we consider in Sec.~\ref{sec:Results/Graph_coloring}.

\section{Variational quantum algorithm}
\label{sec:Quantum}
The main result of the previous sections is that we can imitate $r$-local search for an Ising problem by minimizing the auxiliary function given by Eq.~(\ref{eq:e_cont}). However, the number of variables (groups) \(l\) grows quickly with $r$: it varies from \(l=n\) for bilinear relaxation to \(l=2^n\) for the case where \(\{G_k\}\) are all subsets of spins, in which case all solutions are neighbors of each other. Thus, \(r\) is limited by computational capabilities. 

In this section, we address this problem by introducing a variational quantum algorithm aimed at finding a minimum of Eq.~(\ref{eq:e_cont}). Our approach uses a parametrized quantum circuit to prepare a quantum state and a specially designed algorithm to associate each obtained measurement outcome to some group \(G_k\). The probability of each outcome is converted to the probability of flipping the corresponding group. These flip probabilities can be used for estimating the auxiliary cost function in Eq.~(\ref{eq:e_cont}), which can then be minimized by updating the circuit parameters. When the circuit optimization is finished, the flip probabilities are used to sample the most probable solutions.

We acknowledge that the previous analysis of the local minima is no longer applicable due to the fact that the optimization targets the parameters of the quantum circuit rather than \(\vec{q}\) directly. Nevertheless, our numerical experiments for small-scale problems (see Sec.~\ref{subsec:large_maxcut}) demonstrate that if the circuit depth is sufficient, the solutions are similar to those obtained by classical local search over the corresponding neighborhood. Remarkably, we achieve this performance with the number of optimized circuit parameters much fewer than $l$.

In Sec.~\ref{sec:Quantum/k_to_G}, we show how to get the correspondence between the measurement outcomes and the groups. In Sec.~\ref{sec:Quantum/P_to_q}, we introduce the mapping between the outcome probabilities and the flip probabilities. In Sec.~\ref{sec:Quantum/Ansatz}, we specify the variational ansatz used in our algorithm. Finally, in Sec.~\ref{sec:Quantum/Optimization}, we describe the entire workflow.

\subsection{Mapping measurement output to group}
\label{sec:Quantum/k_to_G}
Hereinafter, instead of $\{G_k\}_{k=0}^{l-1}$, we define the groups as $\{G_{\mu}\}_{\mu=0}^{2^{N_q}-1}$, where $G_{\mu}$ is the group corresponding to measurement outcome $\mu$. Below, we show how to decode $\mu$ into $G_{\mu}$.

\paragraph{Small $l$ case:}
If the number of groups $l$ is small enough and they are explicitly stored in a list $\{G_k\}_{k=0}^{l-1}$, then we take \(\lceil \log_2 l \rceil\) qubits and assume that $G_{\mu}$ is simply the ${\mu}$-th group in this list, with measurement outcomes where ${\mu} \geq l$ being discarded. We refer to this straightforward mapping as \textit{trivial}. 

When the number of groups becomes classically intractable, they can be defined implicitly, e.g., ``all subsets of $[0{:}n{-}1]$ of size $\leq r$'' or ``all connected $\leq r$-size subgraphs of a graph induced by the original problem.'' For these cases, we suggest efficient methods for decoding $G_{\mu}$ from ${\mu}$. 

\paragraph{Full $r$-neighborhood:} 
Let's consider the case when the groups are all subsets of $[0{:}n{-}1]$ of size $\leq r$ (the neighbors are all solutions at Hamming distance at most $r$). The number of groups in this case is $l=\sum_{m=1}^r \binom{n}{m}$, where $\binom{n}{m}$ is a binomial coefficient.

One method for decoding is to convert the outcome number \(\mu\) into a number in base \(n\). The unique digits of this base-\(n\) representation can then be interpreted as elements of the group \(G_{\mu}\). For example, consider $n=16$ and projective measurements of the 12-qubit states $\ket{010010110011}$ and $\ket{100010001000}$. Since $010010110011_2=4B3_{16}$ and $100010001000_2=888_{16}$, the first state corresponds to the group $G_\mu=\{3,4,11\}$, where the elements order does not play a role, whereas the second one to a single-element group $G_\mu=\{8\}$.

For \(r\)-flip groups, we require numbers with up to \(r\) digits, which implies \(\mu \in [0{:}n^r{-}1]\). Note that $n^r>l$ since the digits may repeat, and thus, several $\mu$ map to one group. The number of qubits required for this method is then \(N_q = \lceil r \log_2 n \rceil\), which is larger than \(\lceil \log_2 l \rceil\) for the trivial mapping. However, the overhead is relatively modest: for the extreme case of $r=n$, the required number of qubits becomes \(\lceil n \log_2 n \rceil\) compared to $n$ for the trivial mapping.

The complexity of the base conversion does not exceed \( O(\log (r \log n) M(r\log n))\), where \(M(b) \leq O(b^2)\) is the complexity of \(b\)-bit number multiplication \cite{Brent2010}.

Alternatively, the full \(r\)-neighborhood allows for a \textit{trivial} mapping even when \(l\) is classically intractable, as detailed in Appendix~\ref{sec:Appendix/Mapping}. While this method does not have an overhead in the number of qubits, it comes with a higher time complexity compared to the method described previously.

\paragraph{Sparse neighborhood:} Now, consider a more sophisticated case where only connected subgraphs of the graph induced by the original problem need to be encoded, as described in Sec.~\ref{sec:Objective/NeighborsChoice}. Counting the number of connected subgraphs is a computationally challenging problem \cite{Welsh_1997, subgr1}; therefore, even determining \(l\) becomes infeasible for large \(r\). Nevertheless, the mapping remains possible. 

For simplicity, let us assume a \(d\)-regular graph. Any connected subgraph can be constructed by selecting some initial vertex \(s \in [0{:}n{-}1]\) serving as a reference and sequentially choosing neighbors of the previously selected vertex. In this representation, instead of defining the group as \(G=\{i_1, i_2, \ldots, i_r\}\), where \(i_p \in [0{:}n-1]\) indexes the vertices, the group can be represented as \(W=\{s, j_2, \ldots, j_r\}\), where \(j_p \in [0{:}d-1]\) indexes the neighbors at each step. 

The index \(\mu\) can be converted into \(W\) using a base conversion method similar to the one described earlier, with an adjustment to first identify the starting vertex \(s=\lfloor \mu / d^{r-1} \rfloor\) before applying the base-\(d\) conversion for subsequent elements. Here, \(\mu\in [0{:}nd^{r-1}{-}1]\) and the required number of qubits \(N_q = \lceil \log_2 n + (r-1) \log_2 d \rceil\).

\subsection{Mapping outcome probabilities to variables $\vec{q}$}
\label{sec:Quantum/P_to_q}
To map the probabilities $P_{\mu} \in [0, 1]$ of measurement outcome $\mu$ to the auxiliary function variables $q_{\mu}\in[-1, 1]$, we suggest using the following continuous monotonic transformation:
\begin{equation}
    q_{\mu}=2\frac{\tanh(\alpha(1-MP_{\mu}))+1}{\tanh \alpha +1}-1,
    \label{eq:q_from_P}
\end{equation}
given in Fig.~\ref{fig:q_plot}, where \(\alpha>0\) and \(M>0\) are additional hyperparameters. The parameter $\alpha$ makes the function more step-like, thereby reducing the adverse area where many $q_{\mu}\approx 0$, mentioned in Sec.~\ref{sec:Objective/EncodingNeighbors}. The parameter $M$ impacts how many $q_{\mu}$ can fall in a negative value: under transformation~(\ref{eq:q_from_P}), it happens only for values $P_{\mu} \gtrsim 1/M$. Since $P_{\mu} \in [0,1]$ and $\sum_{\mu} P_{\mu} \leq 1$, not more than $M$ values $q_{\mu}$ can be negative (see Tab.~\ref{tab:q_example} for an example). In Sec.~\ref{sec:Analysis/Shots}, we show that this limitation provides a good finite measurement estimate of the auxiliary function.

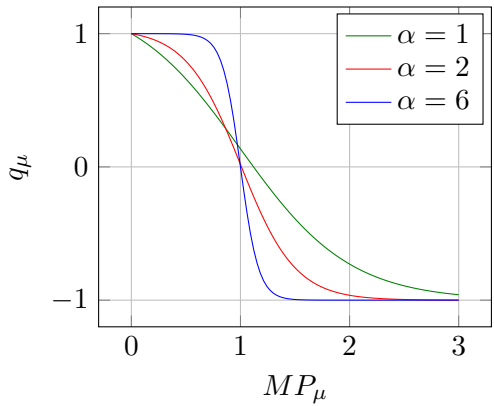
\begin{figure}[H]
\centering
\begin{tikzpicture}
\begin{axis}[grid,
	xlabel = {$MP_{\mu}$},
	ylabel = {$q_{\mu}$},
        width=0.45\textwidth
]
\addplot[domain=0:3, samples=100, color=green!50!black] {2*(tanh(1*(1-x))+1)/(tanh(1)+1)-1};
\addplot[domain=0:3, samples=100, color=red] {2*(tanh(2*(1-x))+1)/(tanh(2)+1)-1};
\addplot[domain=0:3, samples=100, color=blue] {2*(tanh(6*(1-x))+1)/(tanh(6)+1)-1};

\addlegendentry{\(\alpha=1\)}
\addlegendentry{\(\alpha=2\)}
\addlegendentry{\(\alpha=6\)}

\end{axis}
\end{tikzpicture}
\caption{Plots of the transformation $q_{\mu}(P_{\mu})$ for different values of the hyperparameter $\alpha$.}
\label{fig:q_plot}
\end{figure}

\begin{table}[ht!]
    \centering
\begin{tabular}{|c|c|c|c|c|c|c|c|c|c|}
    \hline
     &  & \multicolumn{8}{c|}{$\vec{P}$} \\
     \hline
    &  & $1/4$ & $1/4$ & $1/8$ & $1/8$ & $1/8$ & $1/16$ & $1/16$ & $0$ \\
    \hline
    \hline
     $M$ & $\alpha$ & \multicolumn{8}{c|}{$\vec{q}$} \\
     \hline
     $2$ & $1$ & $0.66$ & $0.66$ & $0.86$ & $0.86$ & $0.86$ & $0.93$ & $0.93$ & $1.00$ \\
     $4$ & $1$ & $0.14$ & $0.14$ & $0.66$ & $0.66$ & $0.66$ & $0.86$ & $0.86$ & $1.00$ \\
     $8$ & $1$ & \textcolor{red}{$-0.73$} & \textcolor{red}{$-0.73$} & $0.14$ & $0.14$ & $0.14$ & $0.66$ & $0.66$ & $1.00$ \\
     $16$ & $1$ & \textcolor{red}{$-0.99$} & \textcolor{red}{$-0.99$} & \textcolor{red}{$-0.73$} & \textcolor{red}{$-0.73$} & \textcolor{red}{$-0.73$} & $0.14$ & $0.14$ & $1.00$ \\
     \hline
     \hline
     $2$ & $2$ & $0.79$ & $0.79$ & $0.94$ & $0.94$ & $0.94$ & $0.98$ & $0.98$ & $1.00$ \\
     $4$ & $2$ & $0.02$ & $0.02$ & $0.79$ & $0.79$ & $0.79$ & $0.94$ & $0.94$ & $1.00$ \\
     $8$ & $2$ & \textcolor{red}{$-0.96$} & \textcolor{red}{$-0.96$} & $0.02$ & $0.02$ & $0.02$ & $0.79$ & $0.79$ & $1.00$ \\
     $16$ & $2$ & \textcolor{red}{$-1.00$} & \textcolor{red}{$-1.00$} & \textcolor{red}{$-0.96$} & \textcolor{red}{$-0.96$} & \textcolor{red}{$-0.96$} & $0.02$ & $0.02$ & $1.00$ \\
     \hline
     \hline
     $2$ & $3$ & $0.91$ & $0.91$ & $0.98$ & $0.98$ & $0.98$ & $0.99$ & $0.99$ & $1.00$ \\
     $4$ & $3$ & $0.00$ & $0.00$ & $0.91$ & $0.91$ & $0.91$ & $0.98$ & $0.98$ & $1.00$ \\
     $8$ & $3$ & \textcolor{red}{$-1.00$} & \textcolor{red}{$-1.00$} & $0.00$ & $0.00$ & $0.00$ & $0.91$ & $0.91$ & $1.00$ \\
     $16$ & $3$ & \textcolor{red}{$-1.00$} & \textcolor{red}{$-1.00$} & \textcolor{red}{$-1.00$} & \textcolor{red}{$-1.00$} & \textcolor{red}{$-1.00$} & $0.00$ & $0.00$ & $1.00$ \\ 
     \hline
\end{tabular}

    \caption{We consider a demonstrative probability distribution over a $3$-qubit state $\vec{P}=\{1/4,1/4,1/8,1/8,1/8,1/16,1/16,0\}$. For different values of $\alpha$ and $M$, we show the corresponding $\vec{q}$. As can be seen, the parameter $M$ affects the number of negative $q_{\mu}$: larger values of $M$ result in a larger number of negative $q_{\mu}$ for a given $\alpha$ and $\vec{P}$. When $\alpha$ is small, $q_{\mu}$ changes smoothly from $-1$ to $1$ as $P_{\mu}$ increases; if $\alpha$ is large, $q_{\mu}$ experiences an abrupt change when $P_{\mu}\approx 1/M$.}
    \label{tab:q_example}
\end{table}

The property $q_{\mu}(0)=1$ implies that if the outcome $\mu$ was not measured in $N$ rounds, the corresponding variable $q_{\mu}$ gets value $1$. Therefore, our encoding avoids the issue present in \textit{minimal} encoding \cite{Tan2021, Perelshtein2023nisqcompatible}, where outcomes that did not occur lead to an undefined bit value. 

\subsection{Variational circuit}
\label{sec:Quantum/Ansatz}
We use a common approach to address optimization problems that lack a problem-inspired circuit ansatz. We exploit a hardware-efficient ansatz~\cite{leone2024practical} and use gates that can be efficiently transpiled into the gates available on the IBM quantum device of type \textit{Eagle r3}\footnote{https://docs.quantum.ibm.com/guides/processor-types}. The structure of the ansatz on $N_q$ qubits, given in Fig.~\ref{fig:ansatz}, is chosen empirically. The ansatz consists of Hadamard gates on each qubit and a sequence of $L$ identical layers, consisting of parameterized \(Z\) and \(Y\) rotations and two-qubit entangling ECR gates, that have the following matrix representation:
\begin{equation}
    \mathrm{E C R} =\frac{1}{\sqrt{2}}\begin{pmatrix}
0 & 1 & 0 & i \\
1 & 0 & -i & 0 \\
0 & i & 0 & 1 \\
-i & 0 & 1 & 0
\end{pmatrix}
\end{equation}
The obtained circuit has $2N_q L$ trainable parameters $\vec{\theta}$.

\begin{figure}[H]
    \centering
    \scalebox{0.8}{
\Qcircuit @C=0.7em @R=0.2em @!R { \\
	 	\nghost{{q}_{0} :  } & \lstick{{q}_{0} :  } & \gate{\mathrm{H}} \barrier[0em]{5} & \qw & \gate{\mathrm{R_Z(\theta_{01})}}  & \qw & \multigate{1}{\mathrm{ECR}} & \qw  & \qw & \gate{\mathrm{R_Y(\theta_{07})}} \barrier[0em]{5} & \qw & \gate{\mathrm{R_Z(\theta_{13})}}  & \qw & \multigate{1}{\mathrm{ECR}} & \qw  & \qw & \gate{\mathrm{R_Y(\theta_{19})}} \barrier[0em]{5} & \qw & \meter\\
	 	\nghost{{q}_{1} :  } & \lstick{{q}_{1} :  } & \gate{\mathrm{H}} & \qw & \gate{\mathrm{R_Z(\theta_{02})}} & \qw & \ghost{\mathrm{ECR}} & \multigate{1}{\mathrm{ECR}} & \qw & \gate{\mathrm{R_Y(\theta_{08})} } & \qw & \gate{\mathrm{R_Z(\theta_{14})}} & \qw & \ghost{\mathrm{ECR}} & \multigate{1}{\mathrm{ECR}} & \qw & \gate{\mathrm{R_Y(\theta_{20})}} & \qw &  \meter\\
	 	\nghost{{q}_{2} :  } & \lstick{{q}_{2} :  } & \gate{\mathrm{H}} & \qw & \gate{\mathrm{R_Z(\theta_{03})}} & \qw & \multigate{1}{\mathrm{ECR}} & \ghost{\mathrm{ECR}} & \qw & \gate{\mathrm{R_Y(\theta_{09})}} & \qw & \gate{\mathrm{R_Z(\theta_{15})}} & \qw & \multigate{1}{\mathrm{ECR}} & \ghost{\mathrm{ECR}} & \qw & \gate{\mathrm{R_Y(\theta_{21})}} & \qw & \meter\\
	 	\nghost{{q}_{3} :  } & \lstick{{q}_{3} :  } & \gate{\mathrm{H}} & \qw & \gate{\mathrm{R_Z(\theta_{04})}} & \qw & \ghost{\mathrm{ECR}} & \multigate{1}{\mathrm{ECR}} & \qw & \gate{\mathrm{R_Y(\theta_{10})}} & \qw & \gate{\mathrm{R_Z(\theta_{16})}} & \qw & \ghost{\mathrm{ECR}} & \multigate{1}{\mathrm{ECR}} & \qw & \gate{\mathrm{R_Y(\theta_{22})}} & \qw &   \meter\\
	 	\nghost{{q}_{4} :  } & \lstick{{q}_{4} :  } & \gate{\mathrm{H}} & \qw & \gate{\mathrm{R_Z(\theta_{05})}} & \qw & \multigate{1}{\mathrm{ECR}} & \ghost{\mathrm{ECR}} & \qw & \gate{\mathrm{R_Y}(\theta_{11})} & \qw & \gate{\mathrm{R_Z(\theta_{17})}} & \qw & \multigate{1}{\mathrm{ECR}} & \ghost{\mathrm{ECR}} & \qw & \gate{\mathrm{R_Y(\theta_{23})}} & \qw &   \meter\\
	 	\nghost{{q}_{5} :  } & \lstick{{q}_{5} :  } & \gate{\mathrm{H}} & \qw & \gate{\mathrm{R_Z(\theta_{06})}} & \qw & \ghost{\mathrm{ECR}} & \qw & \qw & \gate{\mathrm{R_Y(\theta_{12})}} & \qw & \gate{\mathrm{R_Z(\theta_{18})}} & \qw & \ghost{\mathrm{ECR}} & \qw & \qw & \gate{\mathrm{R_Y(\theta_{24})}} & \qw &  \meter\\ }}
    \caption{Ansatz with 6 qubits and 2 layers}
    \label{fig:ansatz}
\end{figure}
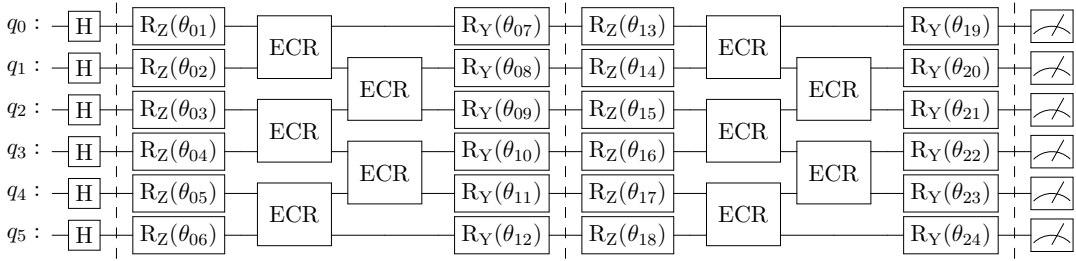

\subsection{Optimization process}
\label{sec:Quantum/Optimization}
After substituting Eq.~(\ref{eq:q_from_P}) into Eq.~(\ref{eq:e_cont}) and introducing a parameterized quantum circuit, we obtain a composite auxiliary function \(\mathcal{E}=\mathcal{E}\left[\vec{q}\left(\vec{P}(\vec{\theta})\right)\right]\). Here, we outline how to optimize it and get the solution to the original problem.
\paragraph{Initial solution selection:}
First, we choose an initial solution by fixing $\vec{Z}^{(0)}$ in Eq.~(\ref{eq:e_cont}). Typically, it may be a state with all spins set to one, a random state, or a random feasible solution for constrained problems.
\paragraph{Parameter Optimization:}
In order to estimate the auxiliary function, the quantum circuit is prepared and sampled \(N\) times in a computational basis resulting in the empirical probability distribution $\{N_{\mu}/N\}$ estimating $\vec{P}$, where \(N_{\mu}\) is the number of shots providing the \({\mu}\)-th outcome. Note that although vector $\vec{P}$ is $2^{N_q}$-dimensional, its estimate -- obtained with $N$ shots -- has no more than $N$ nonzero components, and this allows for efficient computations in a sparse format.

To optimize the auxiliary function using gradient-based methods, we also need to compute the gradient.
We are interested in 
\begin{equation}
    \frac{\partial \mathcal{E}}{\partial \vec{\theta}}=\frac{\partial \mathcal{E}}{\partial \vec{q}}\frac{\partial\vec{q}}{\partial \vec{P}}\frac{\partial\vec{P}}{\partial\vec{\theta}},
\end{equation}
where \({\partial \mathcal{E}}/{\partial \vec{q}}\) and \({\partial\vec{q}}/{\partial \vec{P}}\) are computed by direct differentiation of Eq.~(\ref{eq:e_cont}) and Eq.~(\ref{eq:q_from_P}).
We can compute \({\partial\vec{P}}/{\partial\vec{\theta}}\) using the \textit{parameter-shift rule} \cite{PhysRevA.99.032331}:
\begin{equation}
    \frac{\partial\vec{P}}{\partial\theta_j}=\frac{1}{2}\left(\vec{P}(\theta_j+\pi/2)-\vec{P}(\theta_j-\pi/2)\right).
    \label{eq:par_shift}
\end{equation}

The optimization process starts with a random initialization of the parameters $\vec{\theta}=\vec{\theta}_0$ from the uniform distribution on $[0, 2\pi)$. The parameters are then updated by some classical optimizer until convergence. After that, the circuit is sampled another $N$ times to obtain an estimate of $\vec{q}$.

\paragraph{Solution recovery:}
We consider every variable $z_k \in \{-1, 1\}$ of the auxiliary problem given by Eq.~(\ref{eq:e_tmp}) as an independent random variable distributed according to the Bernoulli distribution with "success probability" $p_k=P(z_k = -1) = (1 - q_k)/2$ as shown in Sec.~\ref{sec:Objective/EncodingNeighbors}. We sample a sequence of the $S$ most probable \(\vec{z}\) using the algorithm described in Appendix~\ref{sec:Appendix/Sampling} and select the one, denoted \(\vec{z}^{\mathrm{best}}\), with the minimal energy \(E(\vec{z})\) (Eq.~\ref{eq:e_tmp}). Then, we compute the resulting solution for the original problem \(\vec{Z}^{\mathrm{best}}\) using definition~(\ref{eq:rev}). We note that there are two different samplings in our algorithm: in the first one, the quantum circuit is sampled to obtain $\vec{p}$ -- the parameters of the Bernoulli distribution; in the second one, the solutions are sampled from this distribution.

\paragraph{Restarts:}
We remark that \(\vec{Z}^{\mathrm{best}}\) found in the previous paragraph is not necessarily a local minimum for the classical local search over a corresponding neighborhood. As previously mentioned, this is due to the fact that the quantum algorithm differs from the direct optimization of the auxiliary function. One limitation of the quantum algorithm arises from the $\vec{q}(\vec{P})$ mapping: as shown in Sec.~\ref{sec:Quantum/P_to_q}, the resulting $\vec{q}$ contains no more than $M$ negative components. Intuitively, that means the quantum algorithm can't provide more than $\approx M$ steps of classical local search started from $\vec{Z}^{(0)}$. That is, it makes sense to restart our algorithm taking $\vec{Z}^{(0)}=\vec{Z}^{\mathrm{best}}$. We perform $R$ such rounds, where $R$ is an additional hyperparameter.

\section{Resource analysis}\label{sec:Analysis}
In this section, we analyze the amount of resources, both quantum and classical, required for our algorithm.

\subsection{Number of layers}\label{sec:Analysis/Layers}
The quantum algorithm optimizes the circuit parameters $\vec{\theta}$ rather than \(\vec{q}\) directly. As a result, it performs optimization over a subspace of possible values of \(\vec{q}\), which may result in the appearance of new local minima \cite{larocca2024}. Moreover, in our approach, we encounter the traditional trade-off in variational quantum algorithms between the expressivity of the quantum circuit and the complexity of the parameter optimization. Indeed, an increased number of layers allows us to cover a larger subspace of the domain of variables $\vec{q}$ but increases the difficulty of parameter tuning.

In addition, our scheme gives freedom in choosing the required number of qubits to address the same optimization problem by selecting the number $l$ of groups to encode. Since fewer qubits generally require fewer layers to achieve good expressivity, we face another trade-off: we need to choose $l$ large enough to eliminate local minima of the bilinear relaxation while still allowing the auxiliary function to be optimized with a reasonable number of layers.
In Sec.~\ref{subsec:large_maxcut}, we provide numerical evidence that the number of layers doesn't need to grow linearly with $l$ to achieve good performance, showcasing the advantage of the quantum algorithm over classical optimization of the auxiliary function.

We acknowledge that our algorithm may encounter a barren plateau problem~\cite{McClean2018} inherent to the hardware-efficient circuit. Since our algorithm can utilize arbitrary parameterized circuits, it may be possible to choose a more sophisticated problem-inspired ansatz, which falls into the scope of further research.

\subsection{Number of measurements}\label{sec:Analysis/Shots}
As explained in Sec.~\ref{sec:Quantum/Optimization}, on a real quantum device, the auxiliary function is estimated from a finite number of circuit measurements. In this section, we analyze how many measurements (shots) are required to estimate the auxiliary function with a given accuracy. The inherent difficulty of the auxiliary function is that it is non-linear and can't be expressed as a quantum observable as in VQE and QAOA.

The estimate, $\hat{\mathcal{E}}$, for the true auxiliary function value, $\mathcal{E}$, is obtained by measuring the quantum circuit $N$ times and is equivalent to the value of the auxiliary function at a nearby point, $\hat{P}$. We treat $\hat{\mathcal{E}}$ as a random variable and evaluate the quality of the approximation using the mean squared error, $\text{MSE}=\mathbb{E}(\hat{\mathcal{E}}-\mathcal{E})^2$. The absolute value of the MSE depends both on the number of shots and on the flatness of the function landscape around the evaluation point.

To investigate the number of shots required to obtain a good estimate, we conduct an experiment using a 3-regular MaxCut-256 graph with weights randomly assigned in the interval $[-1,1]$ (see Sec.~\ref{sec:Methods}). For several values $r=1,3,5,6,7$, we take a randomly initialized circuit with $L=10$ layers and the number of qubits corresponding to the chosen $r$. We compute the exact value $\mathcal{E}$ of the auxiliary function by employing exact state-vector simulation and generate the set of 10000 estimates $\{\hat{\mathcal{E}}_i\}$, where each estimated value is obtained by sampling the quantum circuit $N$ times. Conducting the experiment for $N$ in the range $1$ to $2^{24}$, we compute the MSE of the estimates as a function of $N$. We repeat the experiment with a fixed $\alpha=3$ and various values of $M=2,20,200,2000$. The results are presented in Fig.~\ref{fig:shots}.

\begin{figure}
    \centering
    \includegraphics[width=\linewidth]{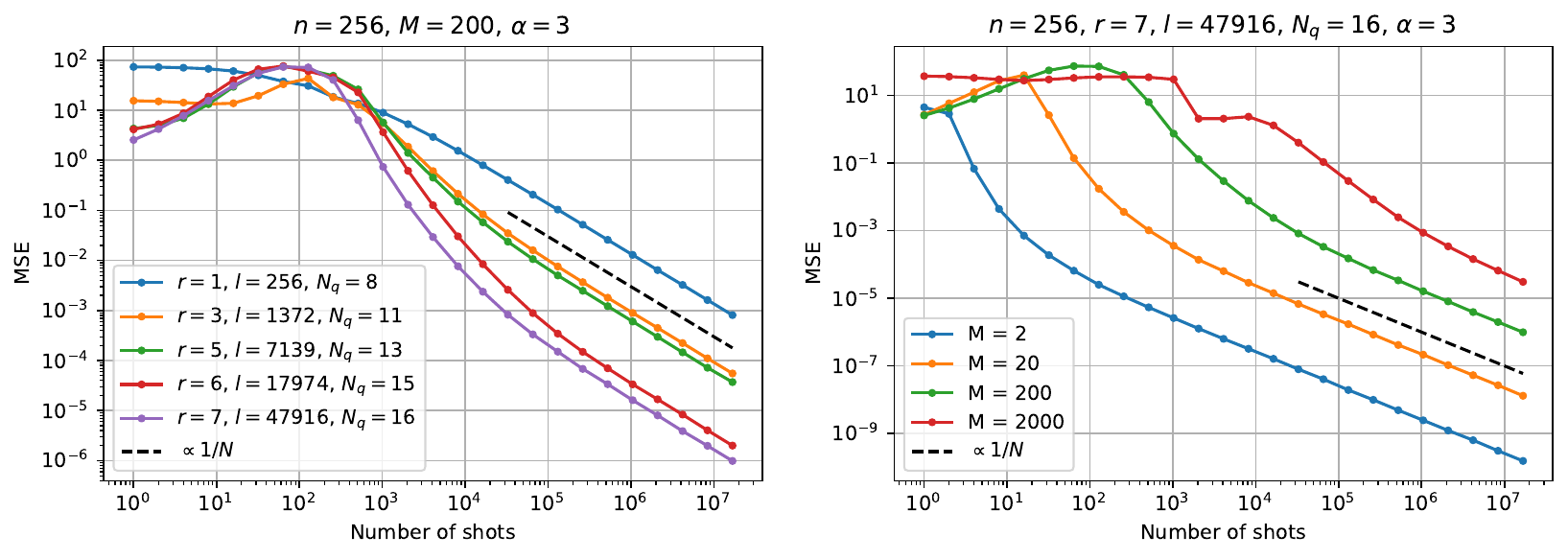}
    \caption{Mean squared error of the auxiliary function estimation obtained with a finite number of measurements (shots). Left: for different $r$; right: for different $M$.}
    \label{fig:shots}
\end{figure}

The right plot of Fig.~\ref{fig:shots} demonstrates that the MSE starts decreasing only when $N\gtrsim M$. Intuitively, such behavior can be explained as follows: if $N<M$, then the outcomes that are sampled at least once have an estimated probability $N_{\mu}/N>1/M$, which corresponds to $q_{\mu}<0$. In that case, the corresponding groups will be flipped even if the true probability $P_{\mu}$ is much smaller and doesn't correspond to a negative $q_{\mu}$. As a result, when $N<M$, the estimate becomes highly inaccurate, and only $N \gg M$ provides a good estimate. Furthermore, the plot indicates that when $N$ is sufficiently large, the MSE asymptotically scales as $\propto 1/N$. This behavior is similar to complete encoding (see Appendix~\ref{sec:Appendix/VQE_comparison}). 

The left plot of Figure~\ref{fig:shots} shows the impact of increasing $l$ with a fixed $M=200$. Remarkably, the behavior does not change significantly as $l$ grows: the MSE still starts decreasing only when $N\gtrsim M$ and reaches the same asymptotic when $N \gg M$. At first glance, one may notice the surprising trend that, for a fixed \( N \), the MSE decreases as \( l \) increases. This may be explained as follows: increasing $l$ worsens the flat landscape problem mentioned in Sec.~\ref{sec:Objective/EncodingNeighbors}. As a result, the variance of the auxiliary function across the landscape decreases, reducing the absolute value of the MSE. That, however, doesn't imply that fewer shots are needed for optimization with large $l$ since the accuracy requirements for optimization may increase accordingly.

In summary, the experiment demonstrates that despite the complex non-linear auxiliary objective function, the number of shots required to estimate it with a fixed accuracy doesn't grow with $l$. The only relevant parameter is $M$: when $N \gg M$, the estimation error reduces with $N$ in a similar manner to complete encoding. We emphasize that our analysis focuses on the number of shots required for fixed accuracy estimation rather than for optimization, which depends on the required accuracy. The latter issue is related to the barren plateau problem~\cite{McClean2018}, a fundamental issue affecting all variational quantum algorithms, and thus lies beyond the scope of this work.

\subsection{Complexity of the auxiliary function and gradient evaluation}
Each product in the auxiliary function given by Eq.~(\ref{eq:e_cont}) contains at most $l$ multipliers, and the entire expression includes \(O(n^2)\) summands. Therefore, the complexity of evaluating the function does not exceed \(O(n^2 l)\) in the general case. However, while optimizing with a finite number of measurements, \(N\), there are not more than \(N\) nonzero \(P_{\mu}\). Since \({q_{\mu}(P_{\mu}=0)=1}\) does not modify the auxiliary function (\ref{eq:e_cont}), it can be efficiently computed in at most \(O(n^2 N)\) classical computing time. 

For gradient estimation, the parameter shift rule requires the execution of two circuits for each parameter. 
Since $\vec{P}$ has no more than $N$ nonzero components, Eq.~(\ref{eq:par_shift}) implies that there are no more than $2N$ nonzero components of $\partial\vec{P}/{\partial\theta_j}$. Each nonzero component requires the evaluation of $\partial q_{\mu} / \partial P_{\mu}$, which can be done in $O(1)$, and $\partial \mathcal{E}/ \partial q_{\mu}$, which, similarly to $\mathcal{E}$, requires \(O(n^2 N)\) time. As a result, we get $O(n^2 N^2)$ for one parameter and $O(n^2 N^2 L N_q)$ for the entire gradient. Recall that for the trivial mapping, $N_q = O(\log l)$.

Given that $L$ and $N$ can be chosen much lower than $l$, as discussed in the previous sections, it follows from the analysis above that it's possible to estimate the auxiliary function and its gradient even if $l$ is classically intractable.
\section{Methods}
\label{sec:Methods}
\subsection{Simulation}
For all the numerical experiments described in Sec.~\ref{sec:Results}, we utilize a noiseless simulator from the \texttt{TensorCircuit} \cite{Zhang2023tensorcircuit} library with a \texttt{Jax} backend, which allows for automatic objective function differentiation. All the optimizations, except those in Sec.~\ref{sec:Results/QPU}, are performed with a \texttt{SciPy} implementation of the L-BFGS-B \cite{Byrd1995ALM} optimizer with the default settings.

Since we are working with heuristic algorithms that may yield different solutions on each run, we repeat each experiment multiple times to gather statistics. For each run, we choose a random initial solution $\vec{Z}^{(0)}$ and random circuit parameters $\vec{\theta}_0$. The number of groups in our experiments is relatively small, which allows us to store them explicitly and implement the \textit{trivial} mapping described in Sec.~\ref{sec:Quantum/k_to_G}.

\subsection{Local search}
\label{sec:Methods/Local_search}
For classical local search, we implement the first move improvement strategy \cite{heuristics}, where at each step, the first neighbor that is better than the current solution is accepted. The flipping groups that generate the neighbors are ordered by size, and within the same size, they are ordered lexicographically. For example, if $n=3$ and $r=2$, the groups are ordered as \(\{0\},\{1\},\{2\},\{0,1\},\{0,2\},\{1,2\}\).
\subsection{Problems}
\paragraph{MaxCut}
A canonical and well-studied combinatorial optimization problem is the MaxCut problem, where the aim is to find the partition of the graph's nodes into two complementary sets, such that the total weight of the edges between these two sets is as large as possible. MaxCut is equivalent to the Ising model (\ref{eq:ising}) with $h_i=0$ and $J_{ij}=w_{ij}$, where $w_{ij}$ is the weight of the edge between $i$th and $j$th nodes in the graph.

As a performance metric, we use the approximation ratio $\eta=E/E_\mathrm{opt}$, where $E$ is the energy of the found solution and $E_\mathrm{opt}$ is the energy of the optimal solution. To find the optimal solution, we use SCIP solver \cite{scip}.

\paragraph{Graph coloring}
Proper coloring in a graph $G = (V, E)$ is an assignment of colors to the vertices in a way that no two adjacent vertices are assigned the same color. For a given number of colors, $k$, we consider the optimization problem that aims to minimize the amount of improperly colored edges. By introducing binary variables $x_{v,i}$, where $v\in V, i \in \{1, \dots, k\}$ and $x_{v,i}$ is equal to one if vertex $v$ is assigned the color $i$, we can formulate the above problem as a QUBO problem~\cite{gc_qubo}:
\begin{equation}
    C(\vec{x}):= \lambda\sum_{v\in V} \left(1 - \sum_{i=1}^k x_{v,i} \right)^2 + \sum_{(v, w) \in E} \sum_{i=1}^k  x_{v,i}x_{w,i} \to \min\limits_{x_{v,i}{\in}\{0,1\}}.
    \label{eq:gr_col}
\end{equation}

It's possible to decode a binary assignment in a coloring only if each vertex is assigned to strictly one color, i.e., $\sum_{i=1}^k x_{v,i}=1$. The first term in Eq.~(\ref{eq:gr_col}) with prefactor \(\lambda >0\) aims to penalize the unfeasible solutions. If the graph can be colored in $k$ colors, the optimal assignment $\vec{x}$ corresponds to a proper coloring and has $C(\vec{x}) = 0$.

\section{Results}
\label{sec:Results}
This section presents the results obtained from the numerical simulation of our algorithm and its implementation on a QPU. We start with MaxCut problems. First, we employ algorithms based on bilinear relaxation (i.e., we set $r=1$ and $l=n$), and then we show how to improve the performance using extra qubits to encode more groups. After that, we consider the graph coloring problem as an example of a constrained problem where encoding a larger problem-specific neighborhood is essential to getting a good solution. Finally, we run an experiment on a real IBM quantum device.
\subsection{Bilinear relaxation}
\label{sec:Results/Minimal_encoding}
First, we investigate the performance of quantum algorithms based on bilinear relaxation. For this numerical experiment, we choose the MaxCut problem on two graph instances with $256$ nodes: one is a 3-regular graph, and the other is a fully connected graph. In both cases, the weights are sampled from a uniform distribution on $[-1,1]$.

We benchmark our method with $r=1$ against the \textit{minimal} encoding algorithm from Refs.~\cite{Tan2021, Perelshtein2023nisqcompatible}. Since $r=1$ implies $l=n=256$, our method requires 8 qubits, whereas the minimal encoding needs one additional ancilla qubit. 
For a fair comparison, in both cases, we optimize the same ansatz depicted in Fig.~\ref{fig:ansatz} with L-BFGS-B using exact circuit simulation. For our method, we set the hyperparameter values: $\alpha=2$, $M=256$, $S=1$, $R=1$. We optimize the circuit with the number of layers ranging from $2$ to $14$.

To provide additional insights, we run the classical 1-local search algorithm described in Sec.~\ref{sec:Methods/Local_search}. We also plot the results obtained with a direct optimization of the bilinear relaxation (Eq.~(\ref{eq:ising_min})) with the L-BFGS-B method. To gather statistics, we repeat all the experiments 50 times as explained in Sec.~\ref{sec:Methods}. The results are presented in Fig.~\ref{fig:min_enc}.
\begin{figure}
    \centering
    \includegraphics[width=\linewidth]{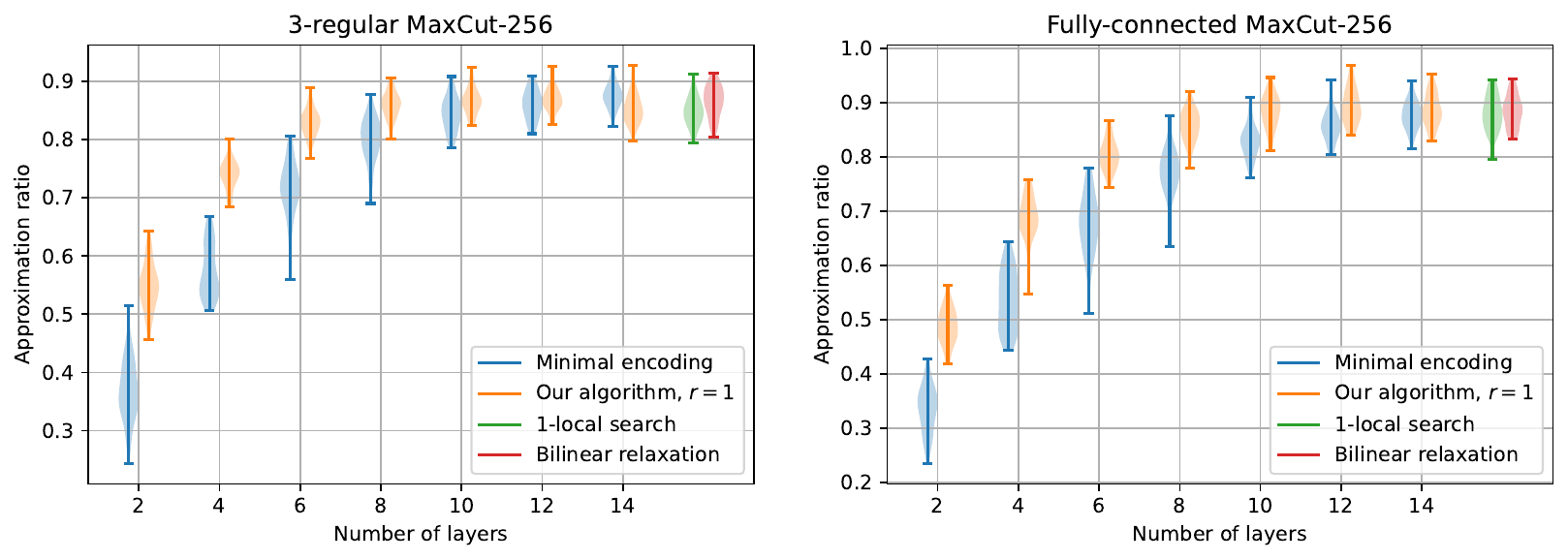}
    \caption{Approximation ratios of solutions obtained by our algorithm with $r=1$ in comparison to \textit{minimal} encoding \cite{Tan2021, Perelshtein2023nisqcompatible}, 1-local search and the direct optimization of the bilinear relaxation (\ref{eq:ising_min}) with the L-BFGS-B method.}
    \label{fig:min_enc}
\end{figure}

This numerical experiment confirms that classical optimization of the bilinear relaxation given by Eq.~(\ref{eq:ising_min}) provides solutions comparable to those obtained via 1-local search, with the performance of the quantum algorithms plateauing approximately on the same level. Additionally, when the number of layers is small, our method, even with $r=1$, outperforms the previously proposed quantum minimal encoding approach.

\subsection{Encoding larger neighborhood}
\label{subsec:large_maxcut}

We now demonstrate how the solutions can be improved by the use of extra qubits to encode a larger number of groups. For this experiment, we selected a 3-regular MaxCut problem with \(n=512\) nodes. The edge weights were randomly sampled from a uniform distribution over \([-1, 1]\). Our algorithm was tested with varying numbers of encoded groups. Since the problem is sparse, we take all connected groups \(G\) of size \(\leq r\) for four different values \(r=1,2,3,4\), as we explained in Sec.~\ref{sec:Objective/NeighborsChoice}.
The hyperparameters were fixed at \(M = n = 512\), \(\alpha = 7\), \(S=n=512\) and \(R=10\). The algorithm was launched from 48 random initial solutions \(\vec{Z}_0\). For comparison, a classical local search was run within the same neighborhood from the same initial solutions. 

To evaluate the results, the ensemble of final solutions from independent runs was compiled into an empirical cumulative distribution function (ECDF) of the objective function values. The ECDF for 16 layers is shown on the right side of Fig.~\ref{fig:maxcut512}. On the left side, we show the average approximation ratio over all runs for the given hyperparameters.

\begin{figure}
    \centering
    \includegraphics[width=\linewidth]{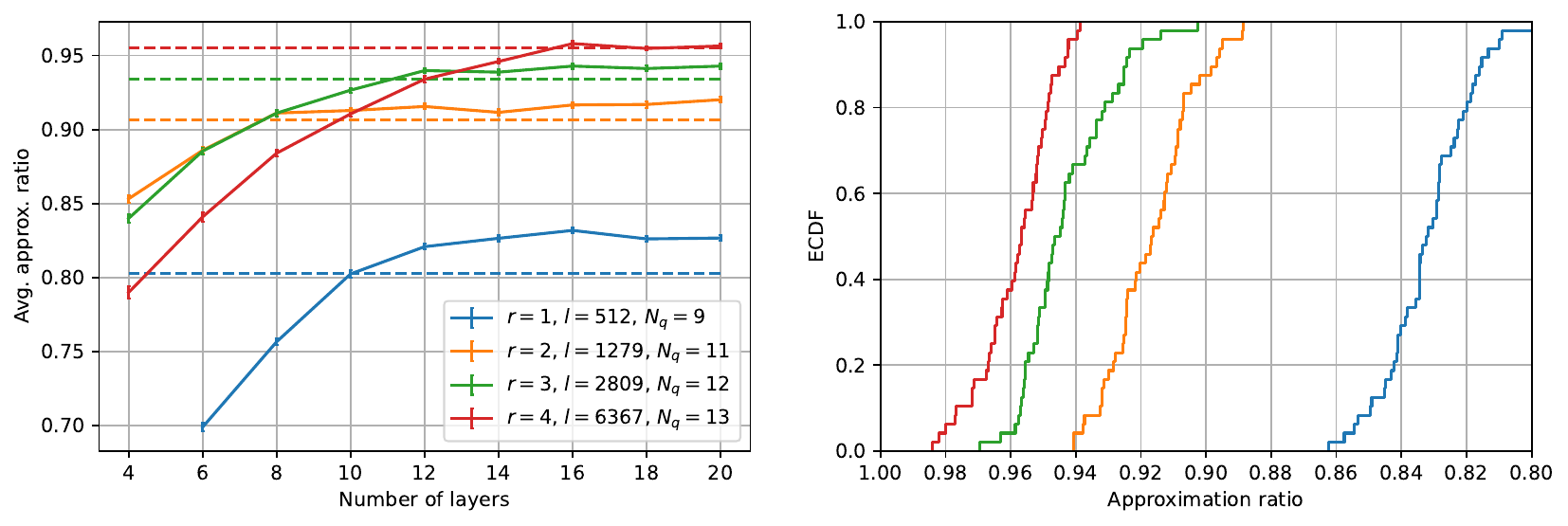}
    \caption{Left: the average approximation ratio for MaxCut-512 found by the algorithm with different hyperparameter configurations: number of neighbors and number of layers. The dashed lines indicate the average approximation ratio of the corresponding classical (discrete) local search. Right: the empirical cumulative distribution function (ECDF) of the approximation ratios of the solutions found by the quantum algorithm with 16 layers.}
    \label{fig:maxcut512}
\end{figure}
\begin{figure}
    \centering
    \includegraphics[width=0.7\linewidth]{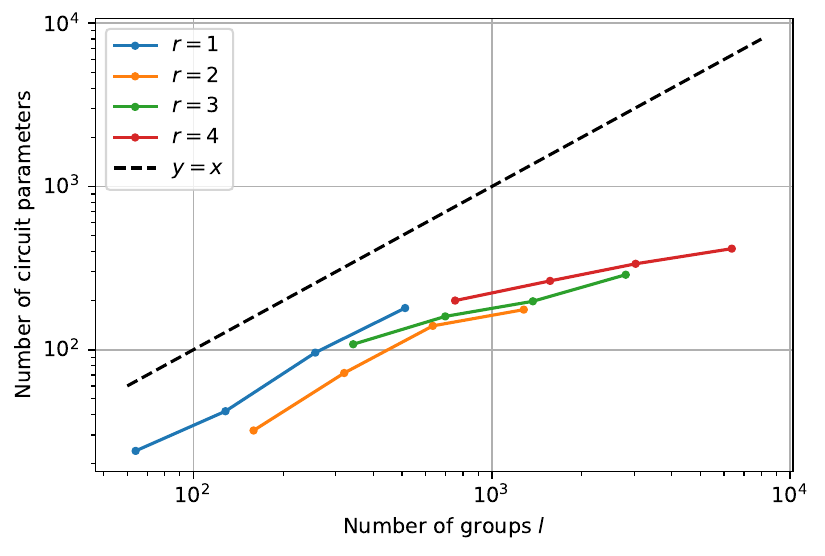}
    \caption{Number of circuit parameters required for reaching the same average approximation ratio as classical local search}
    \label{fig:scaling}
\end{figure}

The results indicate that, given a sufficiently deep circuit, increasing the number of groups improves the quality of the solutions. Notably, for small $r$, the quantum algorithm achieves solutions that are slightly superior to those obtained through classical local search within the corresponding neighborhood. We attribute this improvement to the multiple optimization rounds performed by the quantum algorithm.

To estimate the scaling of the required circuit depth, we repeat the experiment for a set of four 3-regular graphs with $n=64,128,256,512$ nodes. For each problem and each $r$, we define the number of layers required to reach the same average approximation ratio as the corresponding classical local search. We compare the number of circuit parameters with the number of neighbors $l$. The results are presented in Fig.~\ref{fig:scaling}. 

Remarkably, the required number of variational parameters is much smaller than $l$, and the advantage becomes more evident with the growth of $r$. This highlights a key advantage of the quantum algorithm over the classical optimization of the auxiliary function. While the classical approach requires optimizing \(l\) parameters corresponding to the total number of neighbors, the quantum algorithm achieves comparable or better performance with significantly fewer variational parameters.

\subsection{Graph coloring}\label{sec:Results/Graph_coloring}
Here, we present an example of a problem-specific approach to selecting groups for encoding, focusing on a constrained problem: the graph coloring problem. 

The minimal encoding approach is unsuitable for this problem because flipping a single bit in any feasible solution results in an infeasible one. Consequently, all feasible solutions become local minima due to the penalty term introduced in Eq.~(\ref{eq:gr_col}). 

To make a neighborhood of one feasible solution contain another feasible solution, we choose such variables in the auxiliary function such that each of them flips a pair of bits \((x_{v,i}, x_{v,j})\). The set of such variables for all \(v\) and \(i \ne j\) contains all groups that switch the vertex color. Note that such a choice is more efficient than taking all connected pairs since the latter contains pairs \((x_{v,i}, x_{w,i})\) that do not preserve feasibility.

We take the instance \textit{myciel7} from the benchmark in Ref.~\cite{gc_instances} for the experiment. This graph, $G = (V,E)$, has \(|V|=191\) vertices , $|E|=2360$ edges and \(k=8\) colors. Our experimental parameters are: $l=5348$ groups, $N_q=13$ qubits, $L=20$ layers, \(M=1000\), \(\alpha=4\), $S=10$, $R=4$. 

We repeat the experiment 100 times, each starting from a random feasible solution. In this setup, our algorithm successfully found the correct graph coloring in 19 out of 100 runs. To the best of our knowledge, this is the largest graph coloring instance solved with a quantum algorithm \cite{gc_qubo}. Note that algorithms based on the \textit{complete} encoding, including QAOA, would require 1528 (logical) qubits for this instance, which is far from the capabilities of current quantum devices.

\subsection{Experiments on a QPU}
\label{sec:Results/QPU}
Here, we test the applicability of the algorithm on a real noisy quantum device. To begin, we implement the variational optimization of the auxiliary cost function for solving a given MaxCut problem on a numerical simulator that emulates a quantum circuit with ideal gates subject to a finite number of measurements. Then, we test the obtained parameters on a real noisy IBM QPU and compare the outcome with the results from the numerical simulator. 

For the experiment, we chose a MaxCut problem on a 3-regular graph with 128 nodes, where the edge weights are randomly sampled from the interval \([-1,1]\). The algorithm is executed with hyperparameters \(r=1\), \(L=8\), \(M=128\), \(\alpha=2\) and \(R=1\). The circuit optimization is performed using the Simultaneous Perturbation Stochastic Approximation (SPSA) algorithm \cite{Spsa} on a numerical simulator with ideal gates and finite measurement sampling. Specifically, the probability distribution, \(\vec{P}\), is approximated by sampling \(N=1000\) shots from the exact state vector. The convergence plot of over 15,000 iterations, shown on the left of Fig.~\ref{fig:qpu}, illustrates the optimization process under these conditions.

Following this, the optimized circuit is executed on the simulator and the IBM \textit{brisbane} QPU \cite{ibm_quantum} with \(N=10000\) shots to estimate \(\vec{q}\) and extract the most probable solutions (see Sec.~\ref{sec:Quantum/Optimization}). The best solutions found for varying numbers of the most probable samples are depicted on the right of Fig.~\ref{fig:qpu}. As expected, the solutions sampled from the QPU are inferior to those from the simulator due to noise. However, our approach of sampling multiple probable solutions from \(\vec{q}\) drastically improves the quality of the solutions found with a noisy QPU: while QPU-induced noise distorts \(\vec{q}\), taking only the most probable solution would yield a significantly worse result compared to the simulator. By instead leveraging multiple high-probability solutions from the distribution, we partially mitigate the effects of the noise and achieve solutions that are closer to those generated by the simulator. 
\begin{figure}
    \centering
    \includegraphics[width=\linewidth]{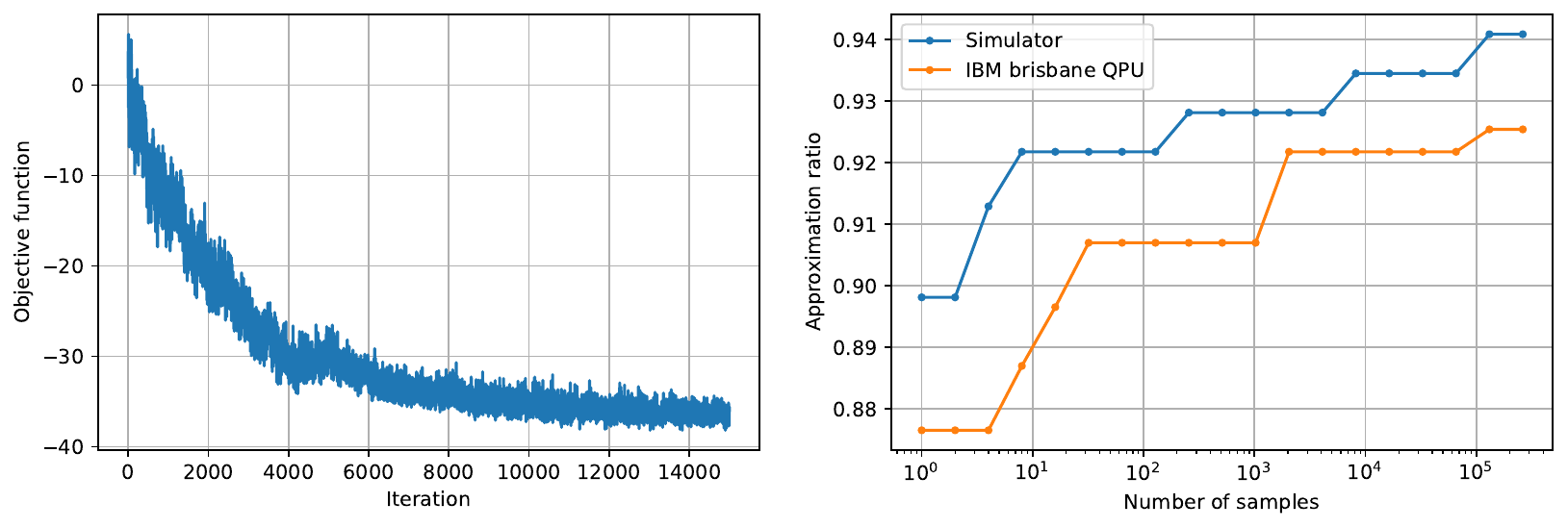}
    \caption{Left: Convergence of parameter optimization with SPSA over iterations using a numerical simulation that emulates the finite sampling of a quantum circuit with ideal gates. Right: approximation ratios of the best solution from the optimized circuit executed on the simulator and IBM \textit{brisbane} QPU, plotted against the number of most probable samples drawn from the distribution generated by the flip probabilities.}
    \label{fig:qpu}
\end{figure}

\section{Conclusion}\label{sec:Conclusion}
In this paper, we introduced a variational quantum algorithm tailored for solving combinatorial optimization problems, offering a flexible encoding scheme that can utilize a variable number of qubits, ranging from the logarithm of the number of classical variables up to the number of classical variables itself. This adaptability bridges the gap between minimal and complete encodings, addressing the trade-off in near-term quantum computing, where the minimal encoding often yields suboptimal solutions, while the complete encoding demands an impractical number of qubits. Through analytical insights and numerical experiments, we demonstrated that our algorithm achieves comparable performance to classical local search within a defined neighborhood while potentially offering an advantage to explore a classically intractable set of neighbors.

We implement our algorithm on various problems where encoding a large neighborhood is essential for achieving competitive solutions. Our approach successfully tackles problems that are currently out of reach for other quantum algorithms like minimal encoding, QAOA, and VQE. Despite these advancements, the algorithm's reliance on a general hardware-efficient variational ansatz poses challenges, including large circuit depths and potential trainability issues. These limitations highlight the need to develop problem-specific ansätze to enhance both performance and scalability. Addressing these challenges constitutes the scope for future work. Overall, our research offers a promising pathway for advancing quantum optimization algorithms on near-term devices, with the potential to solve more complex, real-world problems as quantum hardware continues to evolve.

\begin{acknowledgments}
We thank Katsiaryna Tsarova and Dr. Chris Mansell for their valuable discussions and suggestions.
\end{acknowledgments}

\newpage

\appendix

\section{Comparison with complete encoding}
\label{sec:Appendix/VQE_comparison}
\subsection{Complete encoding review}
Let us briefly review the complete encoding in the context of combinatorial optimization. This encoding requires a number of qubits equal to the number of classical variables: $N_q=n$. The objective is to minimize the average energy:
\begin{equation}
    \braket{E}=\braket{\psi| H |\psi},
    \label{eq:vqe_energy}
\end{equation}
where $H$ is the Ising Hamiltonian. Since it is diagonal in the computational basis, Eq.~(\ref{eq:vqe_energy}) can be rewritten as the average energy over a probability distribution over all possible states. It's convenient to use QUBO notation here since the bitstring, $\vec{\mu}$, which denotes the quantum state matches the corresponding solution $\vec{x}$.
\begin{equation}
    \braket{C}=\sum_{\mu} P_{\mu} C(\vec{\mu}),
    \label{eq:vqe_energy_mod}
\end{equation}
where $C(\vec{\mu})$ is the QUBO objective function (Eq.~\ref{eq:qubo}) and $P_{\mu}=|\braket{\mu|\psi}|^2$.
The estimate of Eq.~(\ref{eq:vqe_energy_mod}), obtained by sampling the quantum circuit $N$ times, is 
\begin{equation}
    \braket{\hat{C}}=\frac{1}{N} \sum_{j=1}^N C(\vec{\mu}^{(j)}),
    \label{eq:vqe_energy_est}
\end{equation}
where $\vec{\mu}^{(j)}$ is the $j$-th measurement outcome. The estimate is unbiased, and its variance decreases $\propto 1/N$~\cite{Guerreschi2017}.
The MSE is equal to the variance since the estimate is unbiased. Note that it doesn't explicitly depend on the number of qubits.
\subsection{Comparison with our encoding}
To see the similarity with our method, consider any solution, $\vec{x}$, as a result of flipping the group $G=\{i:x_i=1\}$ in an initial zero bitstring. For complete encoding, the value of $P_{\mu}$ then represents ``flipping probability,'' where the flip of only one group is allowed. In our auxiliary function, each group is independently flipped with probability $p_{\mu}=(1-q_{\mu})/2$. We can make our encoding quite similar to the complete one as follows:
\begin{enumerate}
    \item Encode all the subsets of the spins (including the empty set). Then, the total number of groups is $2^n$.
    \item Use the mapping $G_{\mu}=\{i:{\mu}_i=1\}$. In this case, there is a one-to-one correspondence between a qubit and a spin, just as with complete encoding.
    \item Choose $\vec{x}^{(0)}=\vec{0}$ ($\vec{Z}^{(0)}=\vec{1}$) as inital solution. In that case, flipping $G_{\mu}=\{i:{\mu}_i=1\}$ in $\vec{x}^{(0)}$ produces the solution $\vec{\mu}$. Now the outcome $\mu$ is related to the solution $\vec{\mu}$, as in complete encoding.
    \item Choose a small $1\lesssim M\lesssim 2$. We thereby allow no more than one negative $q_{\mu}$, ensuring that the most probable solution differs by at most one group flip from the initial $\vec{x}^{(0)}=\vec{0}$. We emphasize that the auxiliary function is still the average energy of the probability distribution over all (degenerate) $2^l=2^{2^n}$ states $\vec{z}$, where each group is either flipped or not. However, the small $M$ provides a low probability of multiple flips. In addition, as shown in Fig.~\ref{fig:shots} in the main text for $M=2$, the estimation error decreases starting from a few shots, which makes the estimation no harder than for complete encoding.
\end{enumerate}

\subsection{Numerical experiment}
We benchmark our algorithm against VQE by taking a fully connected random Ising instance of size $n=20$. We configure our algorithm according to the following scheme: we encode all subsets ($r=20$), set $\vec{Z}^{(0)}=\vec{1}$, $M=2$ and $\alpha=2$. We take 100 random initial solutions for each number of layers and perform $R=1$ round of optimization for each of them on a noiseless simulator. For comparison, we run VQE with the same ansatz and our algorithm with $r=1$, $M=20$, $\alpha=2$ and $r=2$, $M=40$, $\alpha=3$. For VQE, after optimization, we take 500 samples and choose the best one. For our algorithm, we perform one round ($R=1$), take $N=500$ quantum circuit samples and the $S=500$ most probable solutions. Such a choice is fair in that for both VQE and our algorithm, the circuit is sampled 500 times and the objective function is evaluated for 500 solutions. For each number of layers, we plot the average approximation ratio of the solutions obtained from the ensemble of 100 initial points. For the best number of layers, we plot the entire empirical cumulative distribution function. The results are presented in Fig.~\ref{fig:small}.
\begin{figure}
    \centering
    \includegraphics[width=\linewidth]{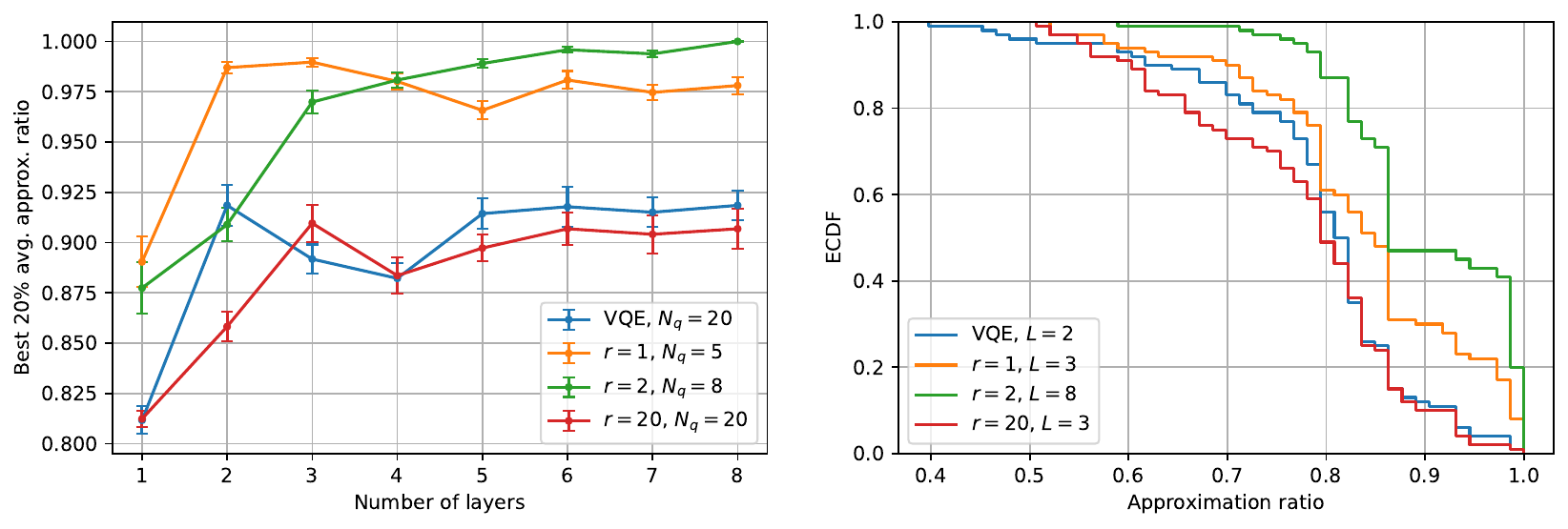}
    \caption{Left: the average approximation ratio of the top 20\% of solutions for Ising-20 found by VQE and our algorithm with different values of $r$ and different numbers of layers. Right: the empirical cumulative distribution function (ECDF) of the approximation ratios of the solutions found by the algorithms with the best number of layers.}
    \label{fig:small}
\end{figure}

The experiment demonstrates that the performance of our algorithm with \( r=20 \) is comparable to that of VQE. However, we also show in Fig.~\ref{fig:small} that using 20 qubits is inefficient, as our algorithm with \( r=1 \) and \( r=2 \), requiring only 5 and 8 qubits, respectively, provides better results. The poor performance of VQE and our algorithm with \( r=20 \) can be attributed to the following: both the auxiliary function (Eq.~(\ref{eq:e_cont})) and the VQE average energy (Eq.~(\ref{eq:vqe_energy_mod})) involve \( 2^{20} \) variables (for VQE we consider $P_{\mu}$ here). If these functions were optimized directly, the optimal solution would always be found, as all local minima map to the global minimum of the original problem. However, in the experiment we conducted, the quantum algorithms optimize only up to 320 parameters, leading to numerous local minima induced by the hardware-efficient ansatz \cite{larocca2024}, significantly deteriorating the quality of the solutions.

In contrast, the auxiliary function for \( r=1 \) and \( r=2 \) contains only 20 and 210 variables, respectively, but has local minima corresponding to $r$-local search. Here, the performance is constrained by the limitations of local search itself. Notably, the performance is also not very good -- the optimal solution to such a small problem is found in no more than 20\% of the runs. This limitation highlights why pure local search is seldom employed; instead, more advanced metaheuristics \cite{heuristics} that build upon local search are typically used to overcome these challenges. Our quantum algorithm can also be integrated into such sophisticated metaheuristics, providing a promising direction for future research to enhance solution quality.

\section{Extension to higher-order problems}
\label{sec:Appendix/HighOrder}
Here we show how to extend our method to the PUBO problem, which is equivalent to a generalized Ising model with higher-order spin interactions \cite{bybee2022efficient}. Let $n$ be the number of variables, $V=\{1,2,\ldots,n\}$. Then, the energy can be written as:

\begin{equation}
    E=\sum_{S\subseteq V} J_S \prod_{i\in S} Z_i,
\label{eq:high_order}
\end{equation}
with \(J_S \in \mathbb{R}\) the model parameters. Substituting Eq.~(\ref{eq:rev}) from the main text into Eq.~(\ref{eq:high_order}), we obtain:
\begin{equation}
    E=\sum_{S\subseteq V} J_S \prod_{i\in S} Z_i^{(0)}\prod_{k:i\in G_k} z_k.
\label{eq:high_order_neigh}
\end{equation}
The identity $z_k^2=1$ allows us to reduce the energy (Eq.~(\ref{eq:high_order_neigh})) to a multilinear form and then replace the discrete $z_k \in \{-1,1\}$ with continuous $q_k \in [-1,1]$. This results in an auxiliary function, $\mathcal{E}(\vec{q})$, that can be optimized with a variational quantum algorithm according to the same scheme as for the second-degree problem discussed in the main text.

\section{Efficient trivial mapping}
\label{sec:Appendix/Mapping}
Suppose the groups are ordered in the following sequence: \[\{0\},\{1\},\{2\}\ldots \{n-1\},\{0,1\},\{0,2\}\ldots\{0,n-1\},\{1,2\}\ldots \{n-2,n-1\},\{0,1,2\},\{0,1,3\}\ldots\]
Here, we outline an efficient algorithm for determining the group, $G$, corresponding to a given measurement output, $\mu$. This procedure enables the implementation of the trivial mapping described in Sec.~\ref{sec:Quantum/k_to_G} for a large number of groups where storing all groups explicitly becomes infeasible. The algorithm is a modified version of the lexicographic unranking method for combinations~\cite{Genitrini2021}.

\begin{enumerate}
    
    \item \textbf{Determine the Size of the Subset:}
    Let \( C(k) = \sum_{i=1}^k \binom{n}{i} \), where \( \binom{n}{i} \) is a binomial coefficient. The value \( C(k) \) represents the total number of subsets of size \( \leq k \). 
    
    Find the size, \( m \), of the subset, \( G \), such that:
    \[
    C(m-1) < \mu \leq C(m).
    \]
    Compute the adjusted position within subsets of size \( m \) as:
    \[
    \mu' = \mu - C(m-1).
    \]
    
    \item \textbf{Generate the \(\mu'\)-th Subset:}
    Start with an empty subset, \( G \), and construct the \( \mu'\)-th subset of size \( m \):
    \begin{enumerate}
        \item Initialize \( m_0 = m \) and \( \mu_0 = \mu' \).
        \item For each candidate element, \( x \), from \( 0 \) to \( n-1 \), do the following:
        \begin{itemize}
            \item Compute the number of subsets of size \( m_0 \) that can be formed with \( x \) as the smallest element:
            \[
            \text{count} = \binom{n-x-1}{m_0 - 1}.
            \]
            \item If \( \mu_0 \leq \text{count} \), add \( x \) to \( G \) and decrement \( m_0 \) by 1. Otherwise, subtract the \( \text{count} \) from \( \mu_0 \).
        \end{itemize}
        \item Repeat until \( m_0 = 0 \).
    \end{enumerate}
\end{enumerate}

This procedure requires $O(n)$ evaluations of large binomial coefficients. The largest possible binomial coefficient is $\binom{n}{r}$ which has an evaluation complexity $O[r M(d)]$, where $M(d)\leq O(d^2)$ is the complexity of $d$-bit number multiplication and $d\leq \log_2 n^r$. Therefore, the overall complexity of the procedure is $O[nr M(r \log n)]$. 

\section{Sampling the Most Probable Solutions}
\label{sec:Appendix/Sampling}
We propose an efficient algorithm to determine the \(S\) most probable configurations \(\vec{z}\) of a multivariate Bernoulli distribution with independent spins. The probability for the \(k\)-th spin to be down is defined as \(p_k := P(z_k = -1)\). The probability of a given configuration, \(\vec{z}\), is given by:

\[
P(\vec{z}) = \prod_{k: z_k = -1} p_k \prod_{k: z_k = 1} (1 - p_k).
\]

The goal is to compute the \(S\) most probable configurations, denoted as \(\vec{z}^{(1)}, \vec{z}^{(2)}, \ldots, \vec{z}^{(S)}\). The algorithm builds these solutions iteratively based on their probabilities.

The most probable configuration, \(\vec{z}^{(1)}\), can be straightforwardly determined as:
   \[
   z^{(1)}_k =
   \begin{cases}
       1 & \text{if } p_k < 0.5, \\
      -1 & \text{otherwise.}
   \end{cases}
   \]

Let \(\hat{F}_k \vec{z}\) denote the configuration obtained by flipping the \(k\)-th spin in \(\vec{z}\). If the \(k\)-th spin is in its most probable state, flipping it scales the probability of the configuration by a factor, \(g_k\), defined as:

\[
g_k = \frac{P(\hat{F}_k \vec{z})}{P(\vec{z})} = 
\begin{cases}
    p_k/(1 - p_k) & \text{if } p_k < 0.5, \\
    (1 - p_k)/p_k & \text{if } p_k \geq 0.5.
\end{cases}
\]
Since \(g_k \leq 1\), flipping a spin from its most probable state always reduces the overall probability of the configuration.

\begin{algorithm}
\caption{Sampling the Most Probable Configurations}
\label{alg:sampling}
\begin{algorithmic}[1]
\STATE \textbf{Input}: spin probabilities \(\vec{p}\), number of samples \(S\)
\STATE Compute \(g_k\) for each spin: \(g_k \gets p_k/(1 - p_k)\) if \(p_k < 0.5\), else \((1 - p_k)/p_k\)
\STATE Compute the initial configuration: \(\vec{z}^{(1)} \gets 1 - 2 \cdot \textsc{Round}(\vec{p})\)
\STATE Compute the probability of \(\vec{z}^{(1)}\): \(P^{(1)} \gets \prod_{k: z^{(1)}_k = -1} p_k \prod_{k: z^{(1)}_k = 1} (1 - p_k)\)
\STATE Initialize a priority list \(A \gets [(\vec{z}^{(1)}, P^{(1)})]\)
\FOR{\(k = 1\) to \(\textsc{Length}(\vec{p})\)}
    \STATE Create a new list \(B \gets []\)
    \FOR{\((\vec{z}, P)\) in \(A\)}
        \STATE Flip the \(k\)-th spin: \(\hat{F}_k \vec{z}\)
        \STATE Compute the new probability: \(P_{\text{new}} \gets g_k \cdot P\)
        \STATE Append \((\hat{F}_k \vec{z}, P_{\text{new}})\) to \(B\)
    \ENDFOR
    \STATE Merge \(A\) and \(B\): \(A \gets A \cup B\)
    \STATE Sort \(A\) by probability \(P\) in descending order
    \STATE Keep the top \(S\) configurations: \(A \gets A[:S]\)
\ENDFOR
\STATE \textbf{Return}: \(A\), containing the \(S\) most probable configurations and their probabilities
\end{algorithmic}
\end{algorithm}

The algorithm for sampling the most probable configurations is given in Alg.~\ref{alg:sampling}. It can be explained as follows:
\begin{enumerate}
\item \textbf{Initialization}: The algorithm starts with the most probable configuration \(\vec{z}^{(1)}\) and its probability \(P^{(1)}\). This configuration is stored in list $A$ along with its probability.
\item \textbf{Iterative Expansion}: At each step, the algorithm generates all the configurations obtained by flipping one spin of each current configuration in \(A\). The probabilities are updated using the precomputed \(g_k\).
\item \textbf{Pruning}: To prevent exponential growth in the number of configurations, $A$ is sorted by probability in descending order, and only the top $S$ configurations are retained. Importantly, ``child'' configurations of the rejected ones -- configurations that differ only in spins not yet considered for flipping -- are automatically excluded. Since $g_k\leq 1$ for all $k$, they have an even lower probability, which ensures that we don't lose any solutions of the interest.
\item \textbf{Termination}: After all the spins have been processed, \(A\) contains the \(S\) most probable configurations.
\end{enumerate}

The most time-consuming procedure in the algorithm is sorting the array $A$ containing at most $2S$ elements, which can be done in $O(S \log S)$. Therefore, the complexity of the algorithm is $O(lS\log S)$, where $l$ is the length of $\vec{p}$. However, if the circuit is measured with a finite number of shots, $N$, there are no more than $N$ nonzero $p_k$, reducing the complexity to $O(NS\log S)$.

\bibliography{references.bib}

\providecommand{\noopsort}[1]{}\providecommand{\singleletter}[1]{#1}%
\begin{thebibliography}{10}

\bibitem{Alidaee2023}
Bahram Alidaee, Haibo Wang, and Lutfu~S. Sua.
\newblock ``An efficient closed-form formula for evaluating r-flip moves in quadratic unconstrained binary optimization''.
\newblock \href{https://dx.doi.org/10.3390/a16120557}{Algorithms {\bf 16}, 557}~(2023).

\bibitem{Kirkpatrick1983}
S.~Kirkpatrick, C.~D. Gelatt, and M.~P. Vecchi.
\newblock ``Optimization by simulated annealing''.
\newblock \href{https://dx.doi.org/10.1126/science.220.4598.671}{Science {\bf 220}, 671–680}~(1983).

\bibitem{Palubeckis2004_ts}
Gintaras Palubeckis.
\newblock ``Multistart tabu search strategies for the unconstrained binary quadratic optimization problem''.
\newblock \href{https://dx.doi.org/10.1023/B:ANOR.0000039522.58036.68}{Annals of Operations Research {\bf 131}, 259--282}~(2004).

\bibitem{Merz2005_memetic}
Peter Merz and Kengo Katayama.
\newblock ``Memetic algorithms for the unconstrained binary quadratic programming problem''.
\newblock \href{https://dx.doi.org/10.1016/j.biosystems.2004.08.002}{Bio Systems {\bf 78}, 99--118}~(2005).

\bibitem{heuristics}
Yang Wang and Jin-Kao Hao.
\newblock ``Metaheuristic algorithms''.
\newblock \href{https://dx.doi.org/10.1007/978-3-031-04520-2_9}{Pages 241--259}.
\newblock Springer International Publishing. Cham~(2022).

\bibitem{Tomesh2022}
Teague Tomesh, Zain~H. Saleem, and Martin Suchara.
\newblock ``Quantum {L}ocal {S}earch with the {Q}uantum {A}lternating {O}perator {A}nsatz''.
\newblock \href{https://dx.doi.org/10.22331/q-2022-08-22-781}{{Quantum} {\bf 6}, 781}~(2022).

\bibitem{Liu2023}
Chen-Yu Liu and Hsi-Sheng Goan.
\newblock ``Reinforcement learning quantum local search''.
\newblock In 2023 IEEE International Conference on Quantum Computing and Engineering (QCE).
\newblock \href{https://dx.doi.org/10.1109/qce57702.2023.10226}{Page 246–247}.
\newblock IEEE~(2023).

\bibitem{Tan2021}
Benjamin Tan, Marc-Antoine Lemonde, Supanut Thanasilp, Jirawat Tangpanitanon, and Dimitris~G. Angelakis.
\newblock ``Qubit-efficient encoding schemes for binary optimisation problems''.
\newblock \href{https://dx.doi.org/10.22331/q-2021-05-04-454}{Quantum {\bf 5}, 454}~(2021).

\bibitem{Perelshtein2023nisqcompatible}
M.~R. Perelshtein, A.~I. Pakhomchik, Ar.~A. Melnikov, M.~Podobrii, A.~Termanova, I.~Kreidich, B.~Nuriev, S.~Iudin, C.~W. Mansell, and V.~M. Vinokur.
\newblock ``{NISQ}-compatible approximate quantum algorithm for unconstrained and constrained discrete optimization''.
\newblock \href{https://dx.doi.org/10.22331/q-2023-11-21-1186}{{Quantum} {\bf 7}, 1186}~(2023).

\bibitem{tenecohen2023}
Yovav Tene-Cohen, Tomer Kelman, Ohad Lev, and Adi Makmal.
\newblock ``A variational qubit-efficient maxcut heuristic algorithm''~(2023).
\newblock  \href{http://arxiv.org/abs/2308.10383}{arXiv:2308.10383}.

\bibitem{PhysRevA.109.052441}
Yagnik Chatterjee, Eric Bourreau, and Marko~J. Ran\ifmmode \check{c}\else \v{c}\fi{}i\ifmmode~\acute{c}\else \'{c}\fi{}.
\newblock ``{Solving various NP-hard problems using exponentially fewer qubits on a quantum computer}''.
\newblock \href{https://dx.doi.org/10.1103/PhysRevA.109.052441}{Phys. Rev. A {\bf 109}, 052441}~(2024).

\bibitem{PhysRevResearch.5.L012021}
Marko~J. Ran\ifmmode \check{c}\else \v{c}\fi{}i\ifmmode~\acute{c}\else \'{c}\fi{}.
\newblock ``Noisy intermediate-scale quantum computing algorithm for solving an $n$-vertex maxcut problem with log($n$) qubits''.
\newblock \href{https://dx.doi.org/10.1103/PhysRevResearch.5.L012021}{Phys. Rev. Res. {\bf 5}, L012021}~(2023).

\bibitem{diff_evolution_scipy}
Rainer Storn and Kenneth Price.
\newblock ``Differential evolution - a simple and efficient heuristic for global optimization over continuous spaces''.
\newblock \href{https://dx.doi.org/10.1023/A:1008202821328}{Journal of Global Optimization {\bf 11}, 341--359}~(1997).

\bibitem{Peruzzo2014}
Alberto Peruzzo, Jarrod McClean, Peter Shadbolt, Man-Hong Yung, Xiao-Qi Zhou, Peter~J. Love, Alán Aspuru-Guzik, and Jeremy~L. O’Brien.
\newblock ``A variational eigenvalue solver on a photonic quantum processor''.
\newblock \href{https://dx.doi.org/10.1038/ncomms5213}{Nature Communications\space{\bf 5}}~(2014).

\bibitem{DezValle2021}
Pablo Díez-Valle, Diego Porras, and Juan~José García-Ripoll.
\newblock ``Quantum variational optimization: The role of entanglement and problem hardness''.
\newblock \href{https://dx.doi.org/10.1103/physreva.104.062426}{Physical Review A{\bf 104}}~(2021).

\bibitem{farhi2014quantum}
Edward Farhi, Jeffrey Goldstone, and Sam Gutmann.
\newblock ``A quantum approximate optimization algorithm''~(2014).
\newblock  \href{http://arxiv.org/abs/1411.4028}{arXiv:1411.4028}.

\bibitem{Blekos_2024}
Kostas Blekos, Dean Brand, Andrea Ceschini, Chiao-Hui Chou, Rui-Hao Li, Komal Pandya, and Alessandro Summer.
\newblock ``A review on quantum approximate optimization algorithm and its variants''.
\newblock \href{https://dx.doi.org/10.1016/j.physrep.2024.03.002}{Physics Reports {\bf 1068}, 1–66}~(2024).

\bibitem{Fuller2024}
B.~Fuller, C.~Hadfield, J.~R. Glick, T.~Imamichi, T.~Itoko, J.~Richard Thompson, Y.~Jiao, M.~Marna Kagele, W.~Adriana, R.~Raymond, and A.~Mezzacapo.
\newblock ``Approximate solutions of combinatorial problems via quantum relaxations''.
\newblock \href{https://dx.doi.org/10.1109/TQE.2024.3421294}{IEEE Transactions on Quantum Engineering {\bf 5}, 1--15}~(2024).

\bibitem{Patti2022}
Taylor Patti, Jean Kossaifi, Anima Anandkumar, and Susanne Yelin.
\newblock ``Variational quantum optimization with multibasis encodings''.
\newblock \href{https://dx.doi.org/10.1103/PhysRevResearch.4.033142}{Physical Review Research\space{\bf 4}}~(2022).

\bibitem{sciorilli2024}
Marco Sciorilli, Lucas Borges, Taylor~L. Patti, Diego García-Martín, Giancarlo Camilo, Anima Anandkumar, and Leandro Aolita.
\newblock ``Towards large-scale quantum optimization solvers with few qubits''~(2024).
\newblock  \href{http://arxiv.org/abs/2401.09421}{arXiv:2401.09421}.

\bibitem{qubo1}
Fred Glover, Gary Kochenberger, and Yu~Du.
\newblock ``Quantum bridge analytics {I}: a tutorial on formulating and using {QUBO} models''.
\newblock \href{https://dx.doi.org/10.1007/s10288-019-00424-y}{4OR\space{\bf 17}}~(2019).

\bibitem{ising}
Andrew Lucas.
\newblock ``Ising formulations of many {NP} problems''.
\newblock \href{https://dx.doi.org/10.3389/fphy.2014.00005}{Frontiers in Physics {\bf 2}, 5}~(2014).

\bibitem{Boros2002PseudoBooleanO}
Endre Boros and Peter~L. Hammer.
\newblock ``Pseudo-boolean optimization''.
\newblock Discret. Appl. Math. {\bf 123}, 155--225~(2002).
\newblock  url:~\url{https://api.semanticscholar.org/CorpusID:11157651}.

\bibitem{Laneve2010}
Cosimo Laneve, Tudor~A. Lascu, and Vania Sordoni.
\newblock ``The interval analysis of multilinear expressions''.
\newblock \href{https://dx.doi.org/10.1016/j.entcs.2010.09.017}{Electronic Notes in Theoretical Computer Science {\bf 267}, 43–53}~(2010).

\bibitem{McClean2018}
Jarrod~R. McClean, Sergio Boixo, Vadim~N. Smelyanskiy, Ryan Babbush, and Hartmut Neven.
\newblock ``Barren plateaus in quantum neural network training landscapes''.
\newblock \href{https://dx.doi.org/10.1038/s41467-018-07090-4}{Nature Communications\space{\bf 9}}~(2018).

\bibitem{Brent2010}
Richard~P. Brent and Paul Zimmermann.
\newblock ``Modern computer arithmetic''.
\newblock \href{https://dx.doi.org/10.1017/cbo9780511921698}{Cambridge University Press}. ~(2010).

\bibitem{Welsh_1997}
Dominic Welsh.
\newblock ``Approximate counting''.
\newblock \href{https://dx.doi.org/10.1017/cbo9780511662119.010}{Page 287–324}.
\newblock Cambridge University Press. ~(1997).

\bibitem{subgr1}
Leslie Ann and Goldberg Jerrum.
\newblock ``{Counting Unlabelled Subtrees of a Tree is \#P-Complete}''.
\newblock \href{https://dx.doi.org/10.1112/S1461157000000243}{LMS Journal of Computation and Mathematics\space{\bf 3}}~(2000).

\bibitem{leone2024practical}
Lorenzo Leone, Salvatore~F.E. Oliviero, Lukasz Cincio, and M.~Cerezo.
\newblock ``On the practical usefulness of the {H}ardware {E}fficient {A}nsatz''.
\newblock \href{https://dx.doi.org/10.22331/q-2024-07-03-1395}{{Quantum} {\bf 8}, 1395}~(2024).

\bibitem{PhysRevA.99.032331}
Maria Schuld, Ville Bergholm, Christian Gogolin, Josh Izaac, and Nathan Killoran.
\newblock ``Evaluating analytic gradients on quantum hardware''.
\newblock \href{https://dx.doi.org/10.1103/PhysRevA.99.032331}{Phys. Rev. A {\bf 99}, 032331}~(2019).

\bibitem{larocca2024}
Martin Larocca, Supanut Thanasilp, Samson Wang, Kunal Sharma, Jacob Biamonte, Patrick~J. Coles, Lukasz Cincio, Jarrod~R. McClean, Zoë Holmes, and M.~Cerezo.
\newblock ``A review of barren plateaus in variational quantum computing''~(2024).
\newblock  \href{http://arxiv.org/abs/2405.00781}{arXiv:2405.00781}.

\bibitem{Zhang2023tensorcircuit}
Shi-Xin Zhang, Jonathan Allcock, Zhou-Quan Wan, Shuo Liu, Jiace Sun, Hao Yu, Xing-Han Yang, Jiezhong Qiu, Zhaofeng Ye, Yu-Qin Chen, Chee-Kong Lee, Yi-Cong Zheng, Shao-Kai Jian, Hong Yao, Chang-Yu Hsieh, and Shengyu Zhang.
\newblock ``Tensor{C}ircuit: a {Q}uantum {S}oftware {F}ramework for the {NISQ} {E}ra''.
\newblock \href{https://dx.doi.org/10.22331/q-2023-02-02-912}{{Quantum} {\bf 7}, 912}~(2023).

\bibitem{Byrd1995ALM}
Richard~H. Byrd, Peihuang Lu, Jorge Nocedal, and Ciyou Zhu.
\newblock ``A limited memory algorithm for bound constrained optimization''.
\newblock SIAM J. Sci. Comput. {\bf 16}, 1190--1208~(1995).
\newblock  url:~\url{https://api.semanticscholar.org/CorpusID:6398414}.

\bibitem{scip}
Suresh Bolusani, Mathieu Besan{\c{c}}on, Ksenia Bestuzheva, Antonia Chmiela, Jo{\~{a}}o Dion{\'{i}}sio, Tim Donkiewicz, Jasper van Doornmalen, Leon Eifler, Mohammed Ghannam, Ambros Gleixner, Christoph Graczyk, Katrin Halbig, Ivo Hedtke, Alexander Hoen, Christopher Hojny, Rolf van~der Hulst, Dominik Kamp, Thorsten Koch, Kevin Kofler, Jurgen Lentz, Julian Manns, Gioni Mexi, Erik M\"{u}hmer, Marc~E. Pfetsch, Franziska Schl{\"o}sser, Felipe Serrano, Yuji Shinano, Mark Turner, Stefan Vigerske, Dieter Weninger, and Lixing Xu.
\newblock ``{The SCIP Optimization Suite 9.0}''.
\newblock Technical report.
\newblock Optimization Online~(2024).
\newblock  url:~\url{https://optimization-online.org/2024/02/the-scip-optimization-suite-9-0/}.

\bibitem{gc_qubo}
Zsolt Tabi, Kareem~H El-Safty, Zs{\'o}fia Kallus, P{\'e}ter H{\'a}ga, Tam{\'a}s Kozsik, Adam Glos, and Zolt{\'a}n Zimbor{\'a}s.
\newblock ``{Quantum Optimization for the Graph Coloring Problem with Space-Efficient Embedding}''.
\newblock In 2020 IEEE international conference on quantum computing and engineering (QCE).
\newblock \href{https://dx.doi.org/10.1109/QCE49297.2020.00018}{Pages 56--62}.
\newblock IEEE~(2020).

\bibitem{gc_instances}
Michael Trick.
\newblock ``Graph coloring {I}nstances''.
\newblock  url:~\url{https://mat.gsia.cmu.edu/COLOR/instances.html}.
\newblock (accessed:~2024-03-25).

\bibitem{Spsa}
James~C. Spall.
\newblock ``An overview of the simultaneous perturbation method for efficient optimization''.
\newblock Johns Hopkins Apl Technical Digest {\bf 19}, 482--492~(1998).
\newblock  url:~\url{https://api.semanticscholar.org/CorpusID:7988308}.

\bibitem{ibm_quantum}
``{IBM} quantum''.
\newblock  url:~\url{https://quantum.ibm.com/}.
\newblock Accessed: 2024-12-12.

\bibitem{Guerreschi2017}
Gian~Giacomo Guerreschi and Mikhail Smelyanskiy.
\newblock ``Practical optimization for hybrid quantum-classical algorithms''~(2017).
\newblock  \href{http://arxiv.org/abs/1701.01450}{arXiv:1701.01450}.

\bibitem{bybee2022efficient}
Connor Bybee, Denis Kleyko, Dmitri~E. Nikonov, Amir Khosrowshahi, Bruno~A. Olshausen, and Friedrich~T. Sommer.
\newblock ``Efficient optimization with higher-order ising machines''~(2022).
\newblock  \href{http://arxiv.org/abs/2212.03426}{arXiv:2212.03426}.

\bibitem{Genitrini2021}
Antoine Genitrini and Martin Pépin.
\newblock ``Lexicographic unranking of combinations revisited''.
\newblock \href{https://dx.doi.org/10.3390/a14030097}{Algorithms {\bf 14}, 97}~(2021).

\end{thebibliography}

\end{document}